\documentclass[10pt, a4paper]{amsart}

\usepackage{amsmath}
\usepackage{latexsym}
\usepackage[all]{xy}
\usepackage{color}
%\usepackage{jgpapersize}

%%
%%    Theorem environments
%%
\theoremstyle{plain}
\newtheorem{theorem}{Theorem}[section]
\newtheorem{proposition}[theorem]{Proposition}

\newtheorem{lemma}[theorem]{Lemma}
\newtheorem{corollary}[theorem]{Corollary}
\theoremstyle{definition}
\newtheorem{definition}[theorem]{Definition}
\newtheorem{example}[theorem]{Example}

%%
%%   Macros
%%

\newcommand{\Diff}[0]{{\textup{Diff}}}

%%
%%   Itemize lists as (i), (ii), ..
%%

%%
%%   Partial derivatives
%%
\newcommand{\pd}[2]{\frac{\partial #1}{\partial #2}}

\parindent=0.5cm
\parskip=.2cm

\begin{document}
\begin{title}
{{On $k$-jet field approximations to geodesic deviation equations}}
\end{title}
\maketitle
\author{

\begin{center}

Ricardo Gallego Torrom\'e\footnote{email: rigato39@gmail.com}\\
Departamento de Matem\'atica,\\
Universidade Federal de S\~ao Carlos, Brazil\\
$\&$\\
Jonathan Gratus\footnote{email: j.gratus@lancaster.ac.uk}\\
Physics Department, Lancaster University, Lancaster LA1 4YB\\
and the Cockcroft Institute, UK
\end{center}}

\newtheorem*{theorem-A}{Theorem A}
\newtheorem*{corollary-B}{Corollary B}

\maketitle
\begin{abstract}
Let $M$ be a smooth manifold and $\mathcal{S}$ a semi-spray defined on a sub-bundle $\mathcal{C}$ of the tangent bundle $TM$. In this work it is proved that the only non-trivial $k$-jet approximation to the exact geodesic deviation equation of $\mathcal{S}$, linear on the deviation functions and invariant under an specific class of local coordinate transformations is the Jacobi equation. However, if the linearity property on the dependence in the deviation functions  is not imposed, then there are differential equations whose solutions admit $k$-jet approximations and are invariant under arbitrary coordinate transformations. As an example of higher order geodesic deviation equations we study the first and second order geodesic deviation equations for a Finsler spray.
\end{abstract}

\begin{section}{Introduction}

Given a connection on the manifold $M$ and two neighboring geodesics, it is a fundamental problem in differential geometry and mathematical physics to determine the relative displacement between the pair of geodesics as a function of a common time parameter.  The standard solution to this problem for Lorentzian manifolds is provided by the {\it Jacobi fields}, the solution of the {\it Jacobi equation} along the {\it central Lorentzian geodesic} \cite{Levi-Civita}.

 Despite the geodesic equation is not linear, the Jacobi equation is a linear differential equation. Therefore, the deviation between two geodesics can be described by the solutions of the Jacobi equation only under some restrictions, that sometimes do not hold in interesting physical situations, for instance, when large curvature effects are of relevance. These limitations motivate the search for consistent generalizations of the Jacobi equation.

Currently there are several theories aiming to generalize the Jacobi equation being extensively used in applications of general relativity and in astronomy:
\begin{itemize}
\item The theory developed by D. E. Hodgkinson \cite{Hodgkinson}, based on the {\it linear rapid deviation hypothesis}. In this paper we refer to the fundamental  equation of such theory as the {\it generalized Jacobi equation} (see equation \eqref{eq:genJacobiEq} below). Under similar hypothesis, the theory was later considered by I. Ciufolini \cite{Ciufolini}. The non-covariant form of the generalised Jacobi equation (GJE) equation of Ciufolini coincides with the analogous equation in Hodgkinson's theory. Surprisingly, the corresponding covariant versions are manifestly different.

\item The theory developed by B. Mashhoon \cite{Mashhoon1, Mashhoon:1977}. This theory has been applied to different situations in Astrophysics  \cite{ChiconeMashhoon:2002, ChiconeMashhoon:2005, ChiconeMashhoon:2006}. Although the aim is the same as in Hodgkinson's theory, namely, to have a theory applicable in situations of large geodesic deviations, the meaning of the generalized Jacobi equation in Mashhoon's theory is not equivalent to Hodgkinson's theory. Mashhoon's construction makes extensive use of Fermi coordinates, and the dynamical variables are defined using Fermi coordinates as starting point.

\item The theory  developed by Ba{\.z}a{\'n}ski and others, based on equations for higher order jet fields (see for instance \cite{Bazanski1977, Bazanskikostyukovich1987, Bazanskikostyukovich1987b, KernerColistetevanHolten:2001, ColisteteLeygnacKerner2002, KoekoekHolten2011}). This formalism does not make use of the {\it linear rapid deviation hypothesis} as in Hodgkinson's theory or of specific coordinate constructions as in Mashhoon's theory.

\end{itemize} The problem that we shall consider is formulated at a local level and in a concrete geometric framework.  In the first part of this work we consider the issue of the general covariance of the Hodgkinson's theory in the general setting of geodesics of arbitrary sprays. A main difficulty of this theory consists on the identification of the geometric character of the solutions. For instance, it is known that the solutions of the generalized Jacobi equation of Hodgkinson-Ciufolini when written in the local coordinate form cannot be tensorial \cite{DahlTorrome}. Partially motivated by this result, the covariant  character of the equation when the solutions of the generalized Jacobi equation as $k$-jet fields is investigated here. In fact, generalizing the result from \cite{DahlTorrome}, we have found that for an arbitrary spray, if the solutions of the generalized geodesic deviation equation are approximated by $k$-jet fields, then Hodgkinson's theory is not general covariant. Indeed, this conclusion holds also in the general case of Hodgkinson's theory associated to a semi-spray.

  The proof of the lack of general covariance of Hodgkinson's theory is obtained in the framework of a notion of general covariance  specially suited to our problem. Two key elements of this definition is that it is formulated entirely in terms of local coordinate transformations. Armed with such definition, we show that Hodgkinson's theory is not general covariant, first recognizing the assumptions under which Hodgkinson's equation is necessarily obtained and then showing that under some specific coordinate transformations, such assumptions break down.

In the second half of this paper we show how Ba{\.z}a{\'n}ski's formalism \cite{Bazanski1977, Bazanskikostyukovich1987, Bazanskikostyukovich1987b} can {be} generalized to geodesic differential equations associated with  arbitrary sprays. We apply the {theory} to the case when the spray corresponds to the geodesic equations of a pseudo-Finsler structure, obtaining the second order geodesic deviation equation for pseudo-Finsler structures. We apply this general theory in two situations, for a family of geodesics in Berwald-like structures and for  pseudo-Finsler structures which are not pseudo-Riemannian.
\end{section}

\begin{section}{Preliminary considerations}
\subsection*{Notation}
Let $M$ be an $n$-dimensional smooth manifold and $\mathcal{C}\subset TM$ a sub-bundle of codimension zero such that each fiber $\mathcal{C}_{\bf x}$ is a sub-manifold of $T_{\bf x}M$, with $\pi:\mathcal {C}\to M$ being the canonical projection of the tangent bundle $TM$ restricted to $\mathcal{C}$. A coordinate chart of $M$ is a pair
\begin{align*}
(U,\{\varphi^\mu:U\to \mathbb{R},\,{\bf x}\mapsto x^\mu,\,\mu=1,...,n\}).
 \end{align*}
 Coordinate indices are indicated by Greek characters, and run from $1$ to $n=dim(M)$. The induced local coordinate charts on $\mathcal{C}$ are $(\,\pi^{-1}(U),(\varphi^\mu,\dot{\varphi}^\nu))$. Repeated up and down Greek indices indicates the sum over $1$ to $n$ on each of the values of the index, if anything else is not stated. Since the nature of the problem is to determine the geometric character of the generalized Jacobi equation, we shall work using local coordinate expressions. In order to distinguish between points of $M$ and their local coordinate representations, we shall denote points on manifolds by bold variables; tangent vectors to curves are indicated by a dot over variables in bold, for instance $\dot{\bf x}$; maps with image on a manifold are also denoted by bold characters, like ${\bf x}:I\to M$ for a generic geodesic, etc... Coordinates and components are denoted with un-bold  characters. For instance, the components of a tangent vector are simply $\dot{x}^\mu$. Similarly,  $x^\mu=\varphi^\mu({\bf x})$ denotes local coordinates of {\bf x} and  $\Gamma^\mu_{\nu\rho}({\bf x},\dot{\bf x})$ denotes connection coefficients, etc...
\subsection*{Geodesic equation of a spray}
\begin{definition}
Let $\mathcal{C}\subset \,TM$ be a sub-bundle of $TM$. A second order differential equation (or semi-spray) is a smooth vector field $\mathcal{S}\in\Gamma\,{T}\mathcal{C}$ such that
\begin{align}
\pi_* (\mathcal{S}|_{\bf u})={\bf u},\quad \forall \, {\bf u}\in \mathcal{C}.
\label{semispray}
\end{align}
\end{definition}
Given a local coordinate chart $(\pi^{-1}(U),(\varphi^\mu,\dot{\varphi}^\mu))$ of $\mathcal{C}$, a semi-spray can be expressed in the form
\begin{align}
\mathcal{S}({\bf x},\dot{\bf x})=\,\dot{x}^\mu\frac{\partial}{\partial x^\mu}\Big|_{({\bf x},\dot{\bf x})}-2\,\mathcal{S}^\mu({\bf x},\dot{\bf x})\frac{\partial}{\partial \dot{x}^\mu}\Big|_{({\bf x},\dot{\bf x})},\quad \mu=1,...,n,
\label{localexpressionforS}
\end{align}
where $(x,\dot{x})$ are the coordinates of the point $({\bf x},\dot{\bf x})$ in the local coordinate chart $(\pi^{-1}(U),(\varphi^\mu,\dot{\varphi}^\nu))$.
The geodesics of a semi-spray $\mathcal{S}$ are the projection on $M$ of the integral curves of $\mathcal{S}$. Thus for a semi-spray $\mathcal{S}$, the geodesics are locally the solutions of the system of ordinary differential equations
\begin{align}
\frac{d {x}^\mu}{dt}=\dot{x}^\mu,\quad \frac{d \dot{x}^\mu}{dt}=\,-2\,\mathcal{S}^\mu({\bf x},\dot{\bf x}),\quad \mu=1,...n.
\label{geodesicequationofS}
\end{align}
If the components $\mathcal{S}^\mu({\bf x},\dot{\bf x})$ are positive homogeneous functions of degree $2$ on the coordinates $\dot{x}$ in the sense that $\mathcal{S}^\mu({\bf x},\lambda\,\dot{\bf x})=\,\lambda^2\,\mathcal{S}^\mu({\bf x},\dot{\bf x})$ for any $\lambda\,>0$,  the semi-spray $\mathcal{S}$ is said to be a {\it spray}. In this case, the sub-bundle $\mathcal{C}$ can be considered to be an open cone bundle over $M$ and  Euler's theorem of homogeneous functions implies the relation
\begin{align}
2\,\mathcal{S}^\mu({\bf x},\dot{\bf x})=\,\dot{x}^\sigma\frac{\partial \mathcal{S}^\mu({\bf x},\dot{\bf x})}{\partial \dot{x}^\sigma},\quad \mu=1,...,n.
\label{Sgammarelation}
\end{align}
The functions
\begin{align}
\Gamma^\mu_{\nu}({\bf x},\dot{\bf x}):=\,\frac{\partial \mathcal{S}^\mu({\bf x},\dot{\bf x})}{\partial \dot{x}^\nu},\quad \mu,\nu=1,...,n
\label{Sgammarelation2}
\end{align}
are {\it the non-linear connection coefficients} of the spray $\mathcal{S}$.
Sprays  have the relevant property  that  their integral curves are re-parametrization invariant, a condition which is lost for a general semi-spray. Henceforth we restrict the considerations made in this paper to connections and geodesics associated with sprays only.

 By the relations \eqref{Sgammarelation}-\eqref{Sgammarelation2}, the geodesic equation \eqref{geodesicequationofS} is equivalent to the following system of second order ordinary differential equations,
\begin{align}
\ddot{x}^{\mu}\,+\,\Gamma^{\mu}_{\sigma}({\bf x},\dot{\bf x})\,\dot x^{\sigma}=0,\quad \mu =1,...,n.
\label{nonlineargeodesicequation}
\end{align}
 The $2$-homogeneity property of the spray  $\mathcal{S}$ implies that the non-linear connection coefficients $\Gamma^\mu_\nu({\bf x},\dot{\bf x})$ are homogeneous of degree one in ${\bf \dot{x}}$. Hence by Euler's theorem of homogeneous functions,
\begin{align*}
\Gamma^\mu_\nu({\bf x},\dot{\bf x})=\,\dot{x}^\rho\frac{\partial \Gamma^\mu_{\nu}({\bf x},\dot{\bf x})}{\partial \dot{x}^\rho},\quad ({\bf  x},\dot{\bf x})\in\, T_{\bf x}M,\quad \mu,\nu=1,...,n
\end{align*}
holds good.
Moreover, there is associated to $\mathcal{S}$ a canonical, linear, torsion-free connection $\nabla$ of the tangent bundle $\pi_{\mathcal{C}}: T\mathcal{C}\to \mathcal{C}$ (see for example   \cite{MironHrimiucShimadaSabau:2002}, {\it Chapter} 1).
The connection
\begin{align}
\Gamma^\mu_{\nu\rho}({\bf x},\dot{\bf x}):=\,\frac{\partial \Gamma^\mu_{\nu}({\bf x},\dot{\bf x})}{\partial \dot{x}^\rho},\quad \mu,\nu,\rho=1,...,n,\quad \,({\bf x},\dot{\bf x})\in\, T_{\bf x}M
\label{definitionofgamma}
\end{align}
define an affine connection if and only if $\Gamma^\mu_{\nu\rho}({\bf x},\dot{\bf x})=\Gamma^\mu_{\nu\rho}({\bf x})$ for each
$\,\mu,\nu,\rho=1,...,n$ and at each point ${\bf x}\in M$.
This is the case when $\mathcal{S}^\mu({\bf x},\dot{\bf x})$ is quadratic in $\dot{x}$-coordinates and
  hence $\Gamma^\mu_{\nu}({\bf x},\dot{\bf x})$ is linear in $\dot{x}$-coordinates. However,
for a general spray, the connection $\nabla$ determined locally by the set of connection coefficients $\{\Gamma^\mu_{\nu\rho}({\bf x},\dot{\bf x}),\mu,\nu,\rho=1,...,n\}$  does not live on $M$ and it is not an affine connection.  Relevant examples of non-affine sprays are found in the geometric theory of Finsler spaces,  where the coefficients $\Gamma^\mu_{\nu\rho}({\bf x},\dot{\bf x})$ of the associated connections depend upon the point ${\bf x}\in\,M$ and the direction $\dot{\bf x}\in\,T_{\bf x} M$ in the tangent space.

The connection $\nabla$ of a spray is {\it symmetric}, that is, the relation
\begin{align}
\Gamma^{\mu}_{\nu\sigma}({\bf x},\dot{\bf x})=\,\Gamma^{\mu}_{\sigma\nu}({\bf x},\dot{\bf x}),\quad\, \mu, \nu,\sigma=1,...,n
\label{symmetrycondition}
\end{align}
 holds good. This relation is covariant and it holds in any coordinate system. Furthermore, because of the homogeneity of the connection coefficients on $\dot{x}$-coordinates,
the geodesic equation \eqref{nonlineargeodesicequation} of a spray $\mathcal{S}$ can be re-casted in the form
\begin{align}
\ddot{x}^{\mu}\,+\,\Gamma^{\mu}_{\nu\sigma}({\bf x},\dot{\bf x})\,\dot x^{\nu}\dot x^{\sigma}=0,\quad\mu=1,...,n.
\label{geodesicequation}
\end{align}
This is the canonical local form of the geodesic equations that we mainly consider in this work.

\subsection*{Exact geodesic deviation equation of a spray}
Let us consider two  geodesics ${\bf x}\colon I\to M$ and $ {\bf X}\colon I\to M, \, I\subset \mathbb{R}$ and assume, in order to simplify the treatment, that  ${\bf x}(I)\subset V$ and ${\bf X}(I)\subset V$. Let
$\{\xi^\mu:I\to \mathbb{R},\,\mu=1,...,n\}$ be the coordinate displacement between the geodesics defined as
\begin{eqnarray}
\label{eq:xiDef}
\xi^\mu:I\to \mathbb{R},\quad s\mapsto \xi^\mu(s) := x^\mu(s) - X^\mu(s), \quad s\in I.
\end{eqnarray}
Since ${\bf x}:I\to M$ and ${\bf X}:I\to M$ are  solutions of the geodesic equation \eqref{geodesicequation}, the expression $\{(\nabla_{\dot{\bf X}}\dot{\bf X})^\mu\,-(\nabla_{\dot{\bf x}}\dot{\bf x})^\mu=0 ,\,\mu=1,...,n\}$ can be written in local coordinates as
\begin{align}
\ddot\xi^{\mu}\,+\,\Gamma^{\mu}_{\nu\sigma}(X+\xi,\dot{X}+\dot{\xi})\,
\Big( \dot X^{\nu}+ \dot \xi^{\nu} \Big)
\Big( \dot X^{\sigma}+ \dot\xi^{\sigma}\Big)
 -\,\Gamma^{\mu }_{\sigma\nu}(X,\dot{X})\,
\dot X^{\sigma} \,\dot X^{\nu}=0.
\label{standardexactdeviationequation}
\end{align}
In this expression the connection coefficients $\Gamma^{\mu}_{\nu\sigma}(X+\xi,\dot{X}+\dot{\xi})$ (resp. $\Gamma^{\mu }_{\sigma\nu}(X)$)  depend on the coordinates $(X+\xi,\dot{X}+\dot{\xi})$ of the point $({\bf x}(s),\dot{\bf x}(s))\in \mathcal{C}$ (resp. $\Gamma^{\mu }_{\sigma\nu}(X,\dot{X})$ depend on the coordinates of $(X,\dot{X})$ of the point $({\bf X}(s),\dot{\bf X}(s))\in \,\mathcal{C})$.

The relation \eqref{standardexactdeviationequation} is referred as the \emph{exact geodesic deviation equation}. It is intrinsically a non-geometric relation, since the possibility of its formulations depends upon a particular class of local coordinate systems, namely, the ones containing on the domain $V\subset M$ the graphs ${\bf X}(I),\,{\bf x}(I)$. However, if a local change of coordinates is such that the new domain contains the geodesics ${\bf x}(I) ,{\bf X}(I)$, then the condition \eqref{standardexactdeviationequation} is consistent. Despite its non-local character and intrinsic non-geometric nature, the relation \eqref{standardexactdeviationequation} leads, under suitable approximations, to geometrically well-defined ordinary differential equations.

Given a fixed geodesic ${\bf X}:I\to M$, let us introduce the function $H:\mathbb{R}^{3n}\to \mathbb{R}^n,$
\begin{align}
(\xi^\mu,\zeta^\mu,\vartheta^\mu)\mapsto \vartheta^{\mu}+\Gamma^{\mu}_{\nu\sigma}(X+\xi,\dot{X}+{\zeta})\,
\Big( \dot X^{\nu}+  \zeta^{\nu} \Big)
\Big( \dot X^{\sigma}+ \zeta^{\sigma}\Big)
 -\,\Gamma^{\mu }_{\sigma\nu}(X,\dot{X})
\dot X^{\sigma} \,\dot X^{\nu}.
\label{equationforD}
\end{align}
Then the exact geodesic deviation equation \eqref{standardexactdeviationequation} reads as the condition
 \begin{align*}
 H(\xi^\mu(s),\dot{\xi}^\mu(s),\ddot{\xi}^\mu(s))=0,
 \end{align*}
  where $H$ is interpreted as a function along the geodesic $X:I\to M$.

\subsection*{A restricted notion of general local covariance}
The set of diffeomorphisms of $\mathbb{R}^n$ is denoted by ${\bf \Diff}(\mathbb{R}^n)$. ${\bf \Diff}(\mathbb{R}^n)$ has the structure of an infinite dimensional Lie group.
Let ${M}$ be a smooth manifold of dimension $n$ and let us consider a local coordinate chart  $(V,\varphi)$ of $M$. Then to each element $\chi\in\,{\bf \Diff}(\mathbb{R}^n)$ there is associated a transformation of local coordinates on $M$ by the map $(V,\varphi)\to (V,\chi\circ\varphi)$, where the domain of $\chi$ is restricted to $\varphi(V)\subset \,\mathbb{R}^n$. In doing this we are considering the restriction $\chi|_{\varphi(V)}:\varphi(V)\to \mathbb{R}^n$ as an element of the pseudo-group $\Gamma(\mathbb{R}^n)$ \cite{KobayashiNomizu1963}.  Indeed, $\Gamma(\mathbb{R}^n)$ is in one-to-one correspondence with the local coordinate transformations of $M$ with the same chart domain $V$.

 A {\it classical physical model} will be represented in a local coordinate system $(V,\varphi)$ of $M$ by a collection of maps
$ \{\psi_A\circ \varphi^{-1}:\varphi(V)\to \mathbb{R}^n,\,A=1,...,k\}.$
Let us consider  an arbitrary element $\chi\in\,\Gamma(\mathbb{R}^n)$ $x\mapsto \tilde{x}=\chi(x).$
\begin{definition}
 The coordinate representation
 $\{\tilde{\psi}_A\circ \tilde{\varphi}^{-1}:\tilde{\varphi}(\tilde{V})\to \mathbb{R}^n,\,A=1,...,k\}$ of a physical variable $\psi$
in the local coordinate system $(\tilde{V},\tilde{\varphi})$ is obtained from the coordinate representation
 $\{\psi_A\circ \varphi^{-1}:\varphi(\tilde{V})\to \mathbb{R}^n,\,A=1,...,k\}$ in the local coordinate system $(V,\varphi)$
 by the action of $\chi\in \,\Gamma(\mathbb{R}^n)$ if three conditions hold:
 \begin{enumerate}

 \item $V\cap\tilde{V}\neq \emptyset,$

 \item $\tilde{\varphi}=\,\chi\circ\varphi$,

 \item There exits a function $\{r_{\chi}:\mathbb{R}^k\to\mathbb{R}^k\}$ such that
 \begin{align}
 \widetilde{\psi}_B\circ \tilde{\varphi}^{-1}=\,r_\chi({\psi}_1\circ \varphi^{-1},...,{\psi}_k\circ \varphi^{-1}),\quad B=1,...,k
 \label{actionofgamma}
 \end{align}
 holds good in $\varphi(V)\cap\,\tilde{\varphi}(\tilde{V}),$ such that $r_\chi$ is a representation of $\Gamma(\mathbb{R}^n)$: for each pair $\chi_1, \chi_2\in \,\Gamma (\mathbb{R}^n)$ it holds that $r_{\chi_1 \cdot \chi_2}=\,r_{\chi_1}\circ r_{\chi_2}$, whenever the composition $r_{\chi_1 \cdot \chi_2}$ is defined.
 \end{enumerate}
 \label{change of local representations}
\end{definition}
In order that the above {\it definition} can be applied, an specific form for $r_\chi$ must be assumed. These laws depend on the specific theory that we could be interested. For instance, the functions $\psi^A\mapsto r_\chi(\psi^A)$ could define a tensorial or a density-like or spinorial quantities. However, the representation \eqref{actionofgamma} is not restricted to these ones. Indeed, another relevant example is the case when the coordinate representation $\{\psi_A,\,A=1,...,k\}$ transforms under coordinate transformation as a $k$-jet.
\begin{definition}
A system of $m$-equations  depending on the local coordinates
$\{x^{\mu}_1,...,x^{\mu}_p\}$
 of points $\{{\bf x}_1,...,{\bf x}_p\}\subset V$ and  components of local fields $\{\psi_{A j}\}^{k,p}_{A=1,j=1}$ written in local coordinates $(V,\varphi)$ as
\begin{align*}
G_i\big(x_1,...,x_p,\psi_{A1}( { x}_1),...,\psi_{Ap}({x}_p)\big)=0,\quad i=1,...,m
\end{align*}
 is said to be {\it general  local covariant}
 iff for each $\chi\,\in\, \Gamma(\mathbb{R}^n)$, the equations
 \begin{align*}
G_i\big(\chi({x}_1),...,\chi({x}_p),\tilde{\psi}_{{A1}}(\chi({x}_1)),...,\tilde{\psi}_{Ap}(\chi({x}_{p})) \big)=0,\quad i=1,...,m
 \end{align*}
 hold in the local coordinate chart $(V,\tilde{\varphi}=\chi\circ \varphi)$, where $\tilde{\psi}_{A_i}$ are defined by a rule \eqref{actionofgamma} according to a given set of representations $r_{A_i}$ of $\Gamma(\mathbb{R}^n)$.
 \label{definicionrestrictedlocalcovariance}
\end{definition}
 This restricted notion of {\it general local covariance} can be extended  straightforwardly to inequalities of the type ${G_i(x,\psi_A)<0,\,i=1,...,m}$, where ${G_i}$ is an expression determined by local objects. Although it is not the most general notion that one could found, our notion of general local covariance embraces the usual notions of covariance, can be applied to non-local equations, that is, equations depending on more than one point on the manifold and it will be general enough for our purposes.

 We should observe how restricted is this notion of general covariance given above. A key point of our {\it definition} is that the domain $V\subset M$ is preserved in transformations considered. In practical terms, that means a restriction on the type of local coordinate changes that we shall consider.

\subsection*{Approximations of the exact deviation equation}
The main motivation to consider this restricted version of general covariance is the following. Under the constraint that the domain $V\subset M$ does not change under local coordinate transformations of the type considered,
the expression  \eqref{standardexactdeviationequation} is invariant. This is because it is obtained as the difference between the components of two zero vectors, the acceleration vectors along the geodesic ${\bf x}:I\to M$ and along ${\bf X}:I\to M$. This preservation of the domain is a sufficient condition for preserving the physical meaning of the  geodesic deviation relation \eqref{standardexactdeviationequation} as the local displacement coordinate functions between two geodesics. Otherwise, if the domain $V$ changes or is restricted under the coordinate transformation, it could be that the geodesics lay out of the new domain.
However, the relation \eqref{standardexactdeviationequation}  is non-local in $M$, since the definition of $\{\xi^\mu:I\to M,\,\mu=1,...,n\}$ involves for each $s\in I$ the local coordinates of two points on the manifold $M$.

In order to understand better the relative behavior of geodesics, one can introduce several approximations:
\begin{itemize}
\item The connection coefficients $\Gamma^{\mu}_{\nu\sigma}(X+\xi,\dot{X}+\dot{\xi})$ are approximated by Taylor's series  in terms of the functions $\xi^\mu$ and $\dot{\xi}^\mu$ at the point $\xi^\mu_0=0$ and $\dot{\xi}^\mu_0=0$.

\item $\{\xi^\mu,\dot \xi^\mu\}^n_{\mu=1}$ are {\it infinitesimal}, in the sense that the terms proportional to $\{\xi^\mu\xi^\nu, \,\xi^\mu\dot{\xi}^\nu,\text{etc}...\}$ can be disregarded.
\end{itemize}
 Then the relation \eqref{standardexactdeviationequation} under these  approximations yields to the expression
\begin{eqnarray}
\ddot \xi^{\mu}_1\,+\pd{\Gamma^{\mu}_{\nu\sigma}}{x^\rho} (X,\dot{X})\,\xi^{\rho}_1
\dot X^{\nu}\,\dot X^{\sigma}
+2\,\Gamma^{\mu }_{\nu\sigma}(X,\dot{X})\, \dot\xi^{\nu}_1\dot X^{\sigma}=0,
\label{Jacobiequation}
\end{eqnarray}
that can be thought as an ordinary differential equation for the variables $\{\xi^i,\,i=1,...,n\}$. Equation \eqref{Jacobiequation} is linear and the set of solutions defines the finite rank vector space of {\it Jacobi fields} along ${\bf X}:I\to M$. Indeed, equation \eqref{Jacobiequation} is a non-explicit covariant way of writing the {\it Jacobi equation} \cite{Levi-Civita} of the connection $\nabla$,
\begin{eqnarray}
    \nabla_{\dot{\bf X}}\nabla_{\dot{\bf X}}{\bf J} + \mathcal{R}({\bf J},\dot{\bf X})\dot{\bf X} = 0,
 \label{covariantformjacobiequation}
\end{eqnarray}
where ${\bf J}(s)=\,\xi^\mu_1(s)\frac{\partial }{\partial x^\mu}\big|_{{\bf X}(s)}\in T_{{\bf X}(s)}M$ and $\mathcal{R}({\bf J},\dot{\bf X})$ is the Riemann type curvature endomorphism  of the connection $\nabla$ determined by ${\bf J},\dot {\bf X}\in T_{{\bf X}(s)}M$. Thus equation \eqref{Jacobiequation} is compatible with the action of the pseudo-group $\Gamma(\mathbb{R}^n)$ and is local in the sense that it is formulated in terms of objects defined along the geodesic ${\bf X}:I\to M$.

If only the deviation functions $\{\xi^\mu,\,\mu=1,...,n\}$ are infinitesimal in the sense that the only quadratic monomials which are negligible are $\{\xi^\mu\xi^\nu,\,\mu,\nu=1,...,n\}$, then equation \eqref{standardexactdeviationequation}  yields to the
\emph{generalized Jacobi equation} \cite{Hodgkinson},
\begin{eqnarray}
\label{eq:genJacobiEq}
\ddot \kappa^{\mu}
+ \Gamma^{\mu}_{\rho\nu} \left( 2 \dot \kappa^{\rho} \dot X^{\nu} + \dot \kappa^{\rho} \dot \kappa^{\nu}\right)
+ \pd{\Gamma^{\mu}_{\rho \nu}}{x^\sigma} \kappa^\sigma \left( \dot X^{\rho}
 + \dot \kappa^{\rho}\right) \left( \dot X^{\nu} + \dot \kappa^{\nu}\right) = 0.
\end{eqnarray}
In this equation, the unknown variables $\{\kappa^\mu\}^n_{\mu=1}$ have an obscure geometric interpretation and motivates the problem of understanding the geometric character of equation \eqref{eq:genJacobiEq}, in particular the question of its {\it general covariance} character is not clear.  The general covariance problem is equivalent to the compatibility of the equation \eqref{eq:genJacobiEq} with the action of the pseudo-group $\Gamma(\mathbb{R}^n)$ on the physical relevant fields.
In practice we shall investigate a weaker condition of general local covariance, based on our {\it definition}  \ref{definicionrestrictedlocalcovariance}. We will find a negative answer even to this restricted covariance issue.

Analogous considerations hold for the general case of semi-spray geodesics. Different Taylor's expansions can be applied to the components $\mathcal{S}^\mu$ of the semi-spray, that leads to the analogous of the equations \eqref{Jacobiequation} and \eqref{eq:genJacobiEq}. The theory and results developed below apply to the general case of semi-spray geodesics deviation equations.
\subsection*{Jet fields approximations}
 Motivated by the difficulties in determining the geometric properties of equations \eqref{eq:genJacobiEq}, it is reasonable to assume first an specific geometric character for its solutions. A very natural possibility is to assume that $\kappa^\mu$ are components of a vector field. However, it was proved that in this case $\{\kappa^{\mu}\}$ cannot be tensorial \cite{DahlTorrome}. Other natural possible interpretation  for the solutions of equation \eqref{eq:genJacobiEq} is that they are smooth maps $\Psi_k:I\to J^k_0(I,M)$  such that the diagram
\begin{align}
\xymatrix{ &
{ J^k_0(I,M)} \ar[d]^{\pi_{k}}\\
{ I} \ar[ur]^{\Psi_k}  \ar[r]^{{\bf X}} & { M}}
\label{commutativediagram}
\end{align}
commutes,  where $J^k_0(I,M)\to M$  is the $k$-jet space and $\pi_k:J^k_0(I,M)\to M$ is the canonical projection on $M$. We call the maps $\Psi_k:I\to J^k_0(I,M)$ $k$-{\it jet fields}. Some relevant fields defined along a given geodesic ${\bf X}:I\to M$ are of this type. For instance, for  $k=1$,  $J^1_0(I,M)$ could correspond to the restriction of  the tangent bundle $\pi_1:TM\to M$ along ${\bf X}:I\to M$  and the sections  $\{\xi^{\mu}_1:I\to J^1_0(I,M)\}$ along ${\bf X}:I\to M$ the Jacobi vector fields ${\bf J}(s)=\,\xi^\mu_1(s)\frac{\partial }{\partial x^\mu}|_{{\bf X}(s)}$, solutions of the Jacobi equation \eqref{covariantformjacobiequation}.

It is convenient to work with $1$-parameter geodesic variation
 associated with the pair of geodesics ${\bf X},{\bf x}:I\to M$. Given a generic spray, a $1$-parameter geodesic variation is a map
\begin{align*}
{\bf \Lambda}:(-\epsilon_0,\epsilon_0)\times I\to M,\quad \epsilon_0>0
\end{align*}
such that the following three requirements are fulfilled:
\begin{itemize}
\item For each $\epsilon\in\,(-\epsilon_0,\epsilon_0)$ the curve ${\bf \Lambda}(\epsilon,\cdot):I\to M$ is a geodesic,

\item ${\bf \Lambda}(0,s)={\bf X}(s)$,

\item There exists a value $\bar{\epsilon}\in\, (-\epsilon_0,\epsilon_0)$ such that ${\bf \Lambda}(\bar{\epsilon},s)={\bf x}(s)$.
\end{itemize}
Then the functions $\Lambda^\mu(\epsilon,s)$ can be expanded in the first variable $\epsilon\in \,(-\epsilon_0,\epsilon_0)$ by Taylor's theorem, allowing to approximate the fields $\{\kappa^\mu: I\to M,\,\mu=1,...,n\}$ in terms of $k$-jet sections and perform an analysis order by order in $\epsilon$ of the general local covariance of the equation \eqref{eq:genJacobiEq}.

 By embedding the geodesics ${\bf X}:I\to M$ and ${\bf x}:I\to M$ in the {ribbon} ${\bf \Lambda}((-\epsilon_0,\epsilon_0)\times I)\subset \,M$ we can approximate the functions $\{\xi^\mu:I\to \mathbb{R},\mu=1,...,n\}$, solutions of the exact deviation equation \eqref{standardexactdeviationequation}, by the Taylor's expansions
\begin{align*}
\xi^\mu(s):=\Omega_k^\mu(\bar{\epsilon},s)+\mathcal{O}(\bar{\epsilon}^{k+1})\,=\,\sum^k_{i=0}\,\bar{\epsilon}^i\,\frac{1}{i!}\Lambda^\mu_i(s)
-\Lambda^\mu(0,s)+\,
\mathcal{O}(\hat{\bar{\epsilon}}^{k+1}),
\end{align*}
for  fixed $\bar{\epsilon},\hat{\bar{\epsilon}}\in\,(-\epsilon_0,\epsilon_0)$. Indeed, it is useful for a variational interpretation of the deviation functions $\xi^\mu$ to consider the following {\it ribbon coordinate functions},
\begin{align}
\Xi^\mu:(-\epsilon_0,\epsilon_0)\times I\to \mathbb{R},\quad (\epsilon,s)\mapsto \Lambda^\mu(\epsilon,s)-\Lambda^\mu(0,s).
\label{ribboncoordinatefunctions}
\end{align}
The fields $ {\Omega} ^\mu_k(\epsilon,s)$ and $\varrho^\mu_{k+1}(s)$ are given by the relations
\begin{align}
{\Omega} ^\mu_k(\epsilon,s)=\Lambda^\mu(\epsilon,s)-\Lambda^\mu(0,s)-\,\frac{ \epsilon^{k+1} }{(k+1)!}\,\varrho^\mu_{k+1}(s):=\,\sum^k_{i=1}\, \frac{\epsilon^i}{i!}\,\xi^\mu_i(s),\quad\mu=1,...,n,
\label{omegakfield}
\end{align}
where the {\it remainder fields} are
  \begin{align}
  {\varrho}^{\mu}_{k+1}(s):=\,\frac{\partial^{k+1} \Lambda^\mu(\epsilon,s)}{\partial \epsilon ^{k+1}}\Big|_{\epsilon=\hat{\epsilon}(s)},\quad\mu=1,...,n.
   \label{remainderfields}
   \end{align}
 The corresponding first time derivatives are
\begin{align*}
\dot{\Omega} ^\mu_k(\epsilon,s):=\,\dot{\Lambda}^\mu(\epsilon,s)-\dot{\Lambda}^\mu(0,s)-\,\frac{ \epsilon^{k+1} }{(k+1)!}\,\dot{\varrho}^\mu_{k+1}(s)=\,\sum^k_{i=1}\, \frac{\epsilon^i}{i!}\,\dot{\xi}^\mu_i(s),\quad\mu=1,...,n.
\end{align*}
The fields
\begin{align*}
 \left\{\xi^{\mu}_j(s):=\,\frac{\partial^j \Lambda^\mu(\epsilon,s)}{\partial \epsilon ^j}\Big|_{\epsilon=0},\mu=1,...,n,\,j=1,...,k\right\}
  \end{align*}
  as defined by the relation \eqref{omegakfield} live along the central geodesic ${\bf \Lambda}(0,s)={\bf X}(s)$. ${\Omega} ^\mu_k$ are $k$-jet fields along ${\bf X}:I\to M$. On the other hand, the remainder fields \eqref{remainderfields}
 do not live along the geodesic ${\bf X}:I\to M$.

The errors in the Taylor's approximations
\begin{align*}
\Xi^\mu(\epsilon,s)\to\Omega^\mu_k(\epsilon,s),\quad \dot{\Xi}^\mu(\epsilon,s)\to\dot{\Omega}^\mu_k(\epsilon,s),\quad \mu=1,...,n
\end{align*}
 are given by the remainder terms of the Taylor's expansions, which are of order $\epsilon^{k+1}$. Therefore, the error in the approximation $\xi^\mu(s)$ by $\Omega^\mu_k(\bar{\epsilon},s)$ is of order $\bar{\hat{\epsilon}}^{k+1}$, since the values $\epsilon$ and $\hat{\bar{\epsilon}}$ are both bounded by $\epsilon_0$, which is considered very small compared to $1$.

 There is also a ribbon version of the exact deviation equation \eqref{standardexactdeviationequation},
\begin{align}
\begin{split}
&\ddot\Xi^{\mu}(\epsilon,s)+\,\Gamma^{\mu}_{\nu\sigma}(X+\Xi(\epsilon,s),\dot{X}+\dot{\Xi}(\epsilon,s))\,
\Big( \dot X^{\nu}+ \dot \Xi^{\nu}(\epsilon,s) \Big)
\Big( \dot X^{\sigma}+ \dot\Xi^{\sigma}(\epsilon,s)\Big)\\
& -\Gamma^{\mu }_{\sigma\nu}(X,\dot{X})\,
\dot X^{\sigma} \,\dot X^{\nu}=0.
\end{split}
\label{ribbonstandardexactdeviationequation}
\end{align}
Note that in {\it equation} \eqref{ribbonstandardexactdeviationequation} there are no derivatives respect to $\epsilon$.  Therefore, partial derivatives respect the parameter $s$ are  usual ordinary  derivatives and equation \eqref{ribbonstandardexactdeviationequation} is indeed an usual ordinary differential equation depending on the external parameter $\epsilon$.

The main problem considered in the first part of this paper can now be stated as follows,
\\
{\bf Problem}. To determine if the further approximation
\begin{align}
\{\kappa^\mu(s),\mu=1,...,n\}\to\{\Omega_k^\mu(\bar{\epsilon},s),\,\mu=1,...,n\},
\label{aproximationkappaOmega}
\end{align}
where $\{\kappa^\mu(s),\,\mu=1,...,n\}$ is a solution of the equation \eqref{eq:genJacobiEq} and the parameter $\bar{\epsilon}\in\,(-\epsilon_0,\epsilon_0)$ is fixed by the condition ${\bf \Lambda}(\bar{\epsilon},s)={\bf x}(s)$, is consistent with the action of the pseudo-group $\Gamma(\mathbb{R}^n)$ as defined above.
\bigskip

For arbitrary sprays, we shall see that the answer to this problem is negative. This implies the non-consistency of Hodgkinson's theory with general covariance as formulated following our {\it definition} \ref{definicionrestrictedlocalcovariance}, since the assumptions on which equation \eqref{eq:genJacobiEq} is based are not consistent with general covariance in the framework of $k$-jet fields.

 \subsection*{Embedding of two geodesics in a geodesic variation}
  We will assume that  in the geodesic variation ${\bf \Lambda}:(-\epsilon_0,\epsilon_0)\times I\to M$, the \emph{transversal parameter} $\epsilon\in\,(-\epsilon_0,\epsilon_0)$ is invariant by the action of each element of $\Gamma(\mathbb{R}^n)$.
Let us consider an spray $\mathcal{S}\in \Gamma\,T\mathcal{C}$.
The following result shows that locally, two nearby enough geodesics ${\bf x}:I\to M$ and ${\bf X}:I\to M$  can be described by a $1$-parameter geodesic variation for a shorter time interval $\tilde{I}\subset I$,
\begin{proposition}
Let ${\bf x}:I\to M$ and ${\bf X}:I\to M$ be two geodesics of the semi-spray $\mathcal{S}$ such that ${\bf x}(0)$ and ${\bf X}(0)$ are connected by an extensible, simple transverse curve ${\bf c}$. Then there is a $1$-parameter geodesic variation  ${\bf \Lambda}:(-\epsilon_0,\epsilon_0)\times \tilde{I}\to M$ such that the central geodesic is ${\bf \Lambda}(0,s)=\, X(s)$ and ${\bf \Lambda}(\bar{\epsilon},s)={\bf x}(s)$ for some $\bar{\epsilon}\in\,(-\epsilon_0,\epsilon_0)$.
\label{Proposition}
\end{proposition}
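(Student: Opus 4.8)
The plan is to realize the variation as the geodesic flow applied to a one-parameter family of initial conditions interpolating between the initial data of $X$ and of $x$. Since the spray $\mathcal{S}$ is a second order vector field on $\mathcal{C}$, every geodesic is uniquely determined by its initial condition (base point together with velocity) in $\mathcal{C}$, and the local flow $\Phi_s$ of $\mathcal{S}$ depends smoothly on that initial condition. The whole construction therefore reduces to producing a smooth curve $\epsilon\mapsto c(\epsilon)\in\mathcal{C}$ with $c(0)=(X(0),\dot X(0))$ and $c(\bar\epsilon)=(x(0),\dot x(0))$, and then setting $\Lambda(\epsilon,s):=\pi\big(\Phi_s(c(\epsilon))\big)$.

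First I would use the transverse curve $\gamma$ to fix the base points. After reparametrizing, $\gamma$ is a curve in $M$ with $\gamma(0)=X(0)$ and $\gamma(\bar\epsilon)=x(0)$; extensibility of $\gamma$ guarantees that both $0$ and $\bar\epsilon$ can be taken interior to an open interval $(-\epsilon_0,\epsilon_0)$. For the velocity part I would choose a smooth field $v(\epsilon)\in T_{\gamma(\epsilon)}M$ along $\gamma$ with $v(0)=\dot X(0)$ and $v(\bar\epsilon)=\dot x(0)$; working in the chart $U$ this is a componentwise interpolation, and because each fibre $\mathcal{C}_x$ is a convex open cone and $\mathcal{C}$ is open in $TM$, for nearby geodesics (equivalently, for $\gamma$ and $\bar\epsilon$ small enough) the resulting curve $c(\epsilon)=(\gamma(\epsilon),v(\epsilon))$ stays inside $\mathcal{C}$. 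This places the whole family of initial conditions in the domain of the spray.

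Next I would set $\Lambda(\epsilon,s)=\pi\big(\Phi_s(c(\epsilon))\big)$, the geodesic with initial condition $c(\epsilon)$ evaluated at parameter $s$. Smooth dependence on initial conditions makes $\Lambda$ smooth, and by construction each $\Lambda(\epsilon,\cdot)$ is a geodesic of $\nabla$; uniqueness of solutions forces $\Lambda(0,\cdot)=X$ and $\Lambda(\bar\epsilon,\cdot)=x$, so the two defining properties of a one-parameter geodesic variation hold together with the interpolation requirement. Simplicity and transversality of $\gamma$ are precisely what guarantee that $\Lambda$ is an embedding near the central geodesic, so that the image is a genuine ribbon and $(\epsilon,s)$ may serve as coordinates, as needed for the ribbon functions of \eqref{ribboncoordinatefunctions}.

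The main obstacle is not the construction itself but the domain on which $\Lambda$ is defined: a priori the geodesics $\Lambda(\epsilon,\cdot)$ with $\epsilon\neq 0,\bar\epsilon$ need not exist over all of $I$. I would resolve this by invoking the openness of the maximal flow domain $\{(c,s):\Phi_s(c)\ \text{defined}\}\subset\mathcal{C}\times\setR$. Both $\{c(0)\}\times I$ and $\{c(\bar\epsilon)\}\times I$ lie in this open set; a tube-lemma and compactness argument along the compact arc $\{c(\epsilon)\}$, together with shrinking $I$ to a slightly smaller interval $\tilde I$ if necessary, yields an $\epsilon_0>0$ for which $\Lambda$ is defined on all of $(-\epsilon_0,\epsilon_0)\times\tilde I$. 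This is exactly the passage to a shorter time interval anticipated before the statement.
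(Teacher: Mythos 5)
Your proposal is correct and follows essentially the same route as the paper: both construct a one-parameter family of initial conditions along the connecting curve $\gamma$ interpolating between $(X(0),\dot X(0))$ and $(x(0),\dot x(0))$, apply Picard--Lindel\"of together with smooth dependence on initial conditions, and use a compactness argument to secure a uniform (possibly shorter) time interval for the whole family. The only divergence is how the velocities are interpolated --- the paper parallel transports both $\dot X(0)$ and $\dot x(0)$ along $\gamma$ and takes the convex combination $\mathcal{Z}(\delta,0)=\frac{1}{\bar\delta}\bigl(\delta\,\hat{x}'(\delta)+(\bar\delta-\delta)\hat{X}'(\delta)\bigr)$, exploiting convexity of the cone fibers, whereas you interpolate componentwise in a chart --- but the paper's own remark states that the choice of transport in this construction is inessential, so this is a cosmetic rather than substantive difference.
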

 \begin{proof}
The two initial points ${\bf x}(0)$ and ${\bf X}(0)$ can be joined by the {\it connecting curve} ${\bf c}:[0,\bar{\delta}]\to M$ with ${\bf c}(0)={\bf X}(0)$ and ${\bf c}(\bar{\delta})={\bf x}(0)$.   Indeed, one can extend the curve ${\bf c}$ to include $[0,\bar{\delta}]$ in an open interval $(0-\sigma, \delta_0)$, with $\bar{\delta}<\delta_0$. To obtain the desired geodesic variation we construct an appropriate initial conditions along ${\bf c}$. First, the tangent vector $\dot{{\bf X}}(0)$ is parallel transported along ${\bf c}$ by the parallel transport of the connection $\nabla$, defining a vector field $\hat{\bf X}'(\delta)$ along ${\bf c}$. A similar operation can be done for $\dot{{\bf x}}(0)$ but along the inverted curve $-{\bf c}(s):={\bf c}(\bar{\delta}-s)$ joining the points  ${\bf x}(0)$ and ${\bf X}(0)$, determining a vector field ${\hat{\bf x}}'({\bf \delta})$ along $-{\bf c}(s)$. Then let us consider the linear combination of vector fields along ${\bf c}$,
 \begin{align*}
  {\bf {Z}}(\delta,0)=\,\frac{1}{\bar{\delta}}\,\Big(\delta\,\hat{{\bf x}}'(\delta)+\,(\bar{\delta}-\delta)\hat{{\bf X}}'(\delta)\Big)\in T_{{\bf c}(\delta)}M,\quad \bar{\delta}\neq 0.
  \end{align*}
  By Picard-Lindel\"{o}f's theorem, each initial value $({\bf c}(\delta),{\bf Z}(\delta,0))$ determines an unique geodesic ${\bf \Lambda}(\delta,\cdot):[0,s_{max}(\delta)]\subset I\to M$ for some maximal $s_{max}(\delta)\in I$ positive. By a standard argument using the compactness of the domain $[0,\bar{\delta}]\times I$ it follows that
  \begin{align*}
  I\ni\,\hat{s}_{max}:=\,min\{s_{\max}(\delta),\delta \in\,[0,\bar{\delta}]\}>0.
  \end{align*}
 By continuity, the same is true for $\sigma$ small enough and $s_{max}>0$ defined as
   \begin{align*}
  {s}_{max}:=\,min\{s_{\max}(\delta),\delta \in\,(0-\sigma,\bar{\delta}-\sigma)\}.
  \end{align*}
 We have constructed a geodesic variation ${\bf \Lambda}:(0-\sigma,\bar{\delta}+\sigma)\times [0,s_{max}]\to M$ with ${\bf \Lambda}(0,s)={\bf X}(s)$, ${\bf \Lambda}(\bar{\delta},s)={\bf x}(s)$ for $s\in \,[0,s_{max}]$. By a convenient re-parametrization of ${\bf c}$, the parameter in the variation can be redefined in the interval $(-\epsilon_0,\epsilon_0)$ and still be constrained by the conditions ${\bf \Lambda}(0,s)={\bf X}(s)$ and ${\bf \Lambda}(\bar{\epsilon},s)={\bf x}(s)$ in the new re-parametrization of ${\bf \Lambda}$ with the required properties.
   \end{proof}
   \begin{corollary}
   Let ${\bf x},{\bf X}:I\to M$ and ${\bf \Lambda}:(-\epsilon_0,\epsilon_0)\times I\to M$ be as in {\it Proposition} \ref{Proposition}. Then there exists  an $\bar{\epsilon}\in (-\epsilon_0,\epsilon_0)$ such that the relation $\xi^\mu(s)=\Lambda^\mu(\bar{\epsilon},s)-\Lambda^\mu(0,s)$  holds.
   \end{corollary}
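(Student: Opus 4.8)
The plan is essentially to read the claim directly off Proposition \ref{Proposition} together with the coordinate definition \eqref{eq:xiDef} of the deviation functions; no genuine computation is required, and the result is an immediate corollary.

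First I would apply Proposition \ref{Proposition} to the pair $x, X : I \to M$, obtaining a $1$-parameter geodesic variation $\Lambda$ together with a value $\bar{\epsilon} \in (-\epsilon_0, \epsilon_0)$ for which the two pointwise identities $\Lambda(0, s) = X(s)$ and $\Lambda(\bar{\epsilon}, s) = x(s)$ hold simultaneously for every $s$ in the relevant interval.

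Next I would pass to the fixed local chart $(U, z)$ that by hypothesis contains both geodesic images $x(I)$ and $X(I)$. Since two points of $M$ coincide precisely when all their coordinate components agree in a chart, the two identities above translate componentwise into $\Lambda^\mu(0, s) = X^\mu(s)$ and $\Lambda^\mu(\bar{\epsilon}, s) = x^\mu(s)$ for each $\mu = 1, \dots, n$. Subtracting the former from the latter and invoking the definition $\xi^\mu(s) := x^\mu(s) - X^\mu(s)$ then yields $\xi^\mu(s) = \Lambda^\mu(\bar{\epsilon}, s) - \Lambda^\mu(0, s)$, which is exactly the asserted relation.

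I do not expect any genuine obstacle; the statement is a direct unwinding of definitions. The only point meriting attention is the bookkeeping of domains: the construction in Proposition \ref{Proposition} produces the variation only over a possibly shorter parameter interval $[0, s_{max}] \subset I$ before the reparameterization of the connecting curve $\gamma$ rescales the variation parameter into $(-\epsilon_0, \epsilon_0)$, so the identity should be understood as holding on that common $s$-interval $\tilde{I} \subset I$. Because both geodesics are assumed to remain inside the single chart domain $U$, no transition functions intervene and the componentwise identification is legitimate throughout $\tilde{I}$.
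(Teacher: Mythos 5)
Your argument is correct and coincides with the paper's treatment: the paper states this corollary without proof precisely because it is the immediate unwinding of Proposition \ref{Proposition} together with the coordinate definition \eqref{eq:xiDef}, which is exactly what you carry out. Your remark about the $s$-interval shrinking to $\tilde{I}\subset I$ before reparameterization is a legitimate bookkeeping point that the paper itself glosses over, and it does not affect the validity of the conclusion.
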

By application of Taylor's theorem up to order $k$ to the function $\Lambda^\mu$, it follows that in local coordinates
\begin{align}
\Lambda^\mu(\epsilon,s)=\,\sum^{k}_{j=0}\,\frac{1}{k!}\,\epsilon^j\,\frac{\partial^j \Lambda^\mu(\epsilon,s)}{\partial \epsilon ^j}\Big|_{\epsilon=0} +\,\frac{ \epsilon^{k+1} }{(k+1)!}\frac{\partial^{k+1} \Lambda^\mu(\epsilon,s)}{\partial \epsilon ^{k+1}}\Big|_{\epsilon=\hat{\epsilon}(s)},
\label{taylorexpansion}
\end{align}
with $ \hat{\epsilon}(s)\in\,(-\epsilon_0,\epsilon_0).$
Taking derivatives respect to the parameter $s$ in \eqref{taylorexpansion} one obtains
\begin{align}
\dot{\Lambda}^\mu(\epsilon,s)=\,\sum^{k}_{j=0}\,\frac{1}{k!}\,\epsilon^j\,\frac{\partial^j \dot{\Lambda}^\mu(\epsilon,s)}{\partial \epsilon ^j}\Big|_{\epsilon=0} +\,\frac{ \epsilon^{k+1} }{(k+1)!}\,\frac{\partial^{k+1} {\dot\Lambda}^\mu(\epsilon,s)}{\partial \epsilon ^{k+1}}\Big|_{\epsilon=\hat{\epsilon}(s)}.
\end{align}

\subsection*{Approximation schemes}
Let us consider a geodesic variation ${\bf \Lambda}:(-\epsilon_0,\epsilon_0)\times I\to M$ associated to the pair of geodesics ${\bf x}:I\to M$ and ${\bf X}:I\to M$ by an embedding as in {\it Proposition} \ref{Proposition}. Furthermore, let us fix an initial local coordinate system on $M$. Then we can apply Taylor's expansions on $\epsilon$ to the smooth maps $\Lambda^\mu:(-\epsilon_0,\epsilon_0)\times I\to \mathbb{R}$ to approximate $\xi^\mu(s)$. An {\it approximation scheme} is a set of negligible monomials at order $k$ of the free algebra $\mathcal{A}(\xi,\dot{\xi},\ddot{\xi})$ generated by  the monomials $\{\xi^\mu,\dot{\xi}^\nu,\ddot{\xi}^\rho,\,\mu,\nu,\rho=1,...,n\}$.

\begin{example} The following four examples are considered in this work:
\begin{itemize}

\item {\it Trivial approximation scheme}, where none
of the monomials
\begin{align*}
\{\xi^\mu{\xi}^\nu,\xi^\mu\dot{\xi}^\nu,\dot\xi^\mu\dot{\xi}^\nu,\,\mu, \nu =1,...,n\}
\end{align*}
 is  negligible. This corresponds to the exact deviation equation \eqref{standardexactdeviationequation}.

\item {\it Linear approximation scheme}, where all the quadratic monomials generated by
\begin{align*}
\{\xi^\mu{\xi}^\nu,\xi^\mu\dot{\xi}^\nu,\dot\xi^\mu\dot{\xi}^\nu,\,\,\mu, \nu =1,...,n\}
\end{align*}
 are negligible. This corresponds to the approximation leading to the Jacobi equation \eqref{Jacobiequation}.

\item {\it Linear rapid deviation scheme}, where
the only negligible monomials in the algebra $\mathcal{A}(\xi,\dot{\xi},\ddot{\xi})$ are generated by
\begin{align*}
\{\xi^\mu{\xi}^\nu,\,\mu,\nu=1,...,n\}.
\end{align*}
This corresponds to the approximation leading to the generalized Jacobi equation \eqref{eq:genJacobiEq} of Hodgkinson's theory \cite{Hodgkinson}.
\item {\it Quadratic approximation}, where from the free algebra $\mathcal{A}(\xi,\dot{\xi},\ddot{\xi})$ only the monomials
\begin{align*}
\{\xi^\mu,\dot{\xi}^\nu,\ddot{\xi}^\rho,\xi^\mu{\xi}^\nu,\xi^\mu\dot{\xi}^\nu,\dot\xi^\mu\dot{\xi}^\nu,\,\mu, \nu =1,...,n\}
\end{align*}
are not negligible. This scheme corresponds to the second order differential equation in Ba{\.z}a{\'n}ski's theory \cite{Bazanski1977}.

\end{itemize}
The approximations schemes can be re-casted in terms of monomials generated by the functions $\{\Xi^\mu\}^n_{\mu=1}$ and their time derivatives and alternatively, in terms of the functions $\{\Omega^\mu_k\}^n_{\mu=1}$ and their times derivatives. Thus it is equivalent to neglect a monomial like   $\xi^\mu\dot{\xi}^\nu$ or to neglect the corresponding {\it ribbon monomial} $\Xi^\mu\dot{\Xi}^\nu$.
\end{example}

The full scheme of approximations behind the $k$-jet solutions  scheme to the generalized Jacobi equation \eqref{eq:genJacobiEq} are the following:
\begin{enumerate}
\item The deviation functions $\{\xi^\mu:I\to \mathbb{R},\,\mu=1,...,n\}$ are substituted by the ribbon coordinate functions $\{\Xi^\mu:(-\epsilon_0,\epsilon_0)\times I\to \mathbb{R},\,\mu=1,...,n\}$. The justification of this fact is based on {\it Proposition} \ref{Proposition}.

\item The ribbon coordinate functions $\{\Xi^{\mu}:(-\epsilon_0,\epsilon_0)\times I\to \mathbb{R}\}$ are approximated by the Taylor's expansions $\{\Omega_k^\mu:(-\epsilon_0,\epsilon_0)\times I\to \mathbb{R}\}$. This approximation can be done with arbitrary accuracy by choosing a high enough order $k$, by Taylor's theorem.
\item One assumes a particular approximation scheme. In particular the linear rapid deviation approximation scheme is equivalent to the approximation $\Omega^\mu_k\to \kappa^\mu$, $\mu=1,...,n$  in the ribbon version \eqref{ribbonstandardexactdeviationequation} of the exact deviation equation\eqref{standardexactdeviationequation}, that  when applied at the parameter $\epsilon=\bar\epsilon$, leads to the generalized Jacobi equation of Hodgkinson.
\end{enumerate}
These three approximations together with the embedding property showed in {\it Proposition} \ref{Proposition} imply that the $k$-jet fields $\{\Omega_k^\mu:(-\epsilon_0,\epsilon_0)\times I\to \mathbb{R}\}$ determine an approximation up to order $\epsilon^{k+1}$ for the solutions  $\{\kappa^\mu:I\to M,\,\mu=1,...,n\}$ of the equation \eqref{eq:genJacobiEq}, for $|\epsilon|<\epsilon_0$.

\section{Method and results}
The consistency criterion to be checked against general covariance according to our {\it definition} \ref{definicionrestrictedlocalcovariance} is that the error introduced in each of the approximations described above must be bounded or be of the same order than the error in the jet field approximation $\{\xi^\mu\to \Omega^\mu_k(\bar{\epsilon}, s),\,\mu=1,...,n\}$. Note that the value $\bar{\epsilon}$ has been fixed. However, in order to have a variational interpretation, we consider the family defined by the variable parameter $\epsilon\in (-\epsilon_0,\epsilon_0)$. If $\epsilon , \bar{\epsilon}\leq \epsilon_0$, the error in the approximation is of the same order for $\epsilon$ and $\bar{\epsilon}$. The physical motivation for this extension is that we do not only require the deviation equation for two specific geodesics, but also we would like to implement for nearby geodesics, a family which is parameterized by $\epsilon$. The differences between the deviations equations for different pair of geodesics is implemented on the initial conditions for the deviation functions $(\chi(0),\chi(0))$.

Therefore, we have the following
\\
\\
{\bf Criteria for consistent $k$-jet field approximation}.
For $k\in \,\mathbb{N}$, the error of a given approximation scheme must be bounded by the error in the $k$-jet approximation of $\Xi^{\mu}(\epsilon,s)$ by $\Omega^\mu_k(\epsilon,s)$, which is of order $k+1$ in $\epsilon$.
\bigskip

The error in considering $\xi^\mu\xi^\nu$ negligible must be comparable to the error of considering $\Xi^\mu_k\Xi^\nu_k$ negligible and  to be compatible with the Taylor's approximation $\Xi^\mu$ by $\Omega^\mu_k$, which implies an error of order $\epsilon^{k+1}$. Therefore, one needs for consistency that the condition
$\Omega^\mu_k\Omega^\nu_k\simeq \mathcal{O}(\epsilon^{k+1})$ holds good.
Similarly, the following conditions must hold for the monomials:
\begin{align}
\Omega^\mu_k\Omega^\nu_k\simeq \mathcal{O}(\epsilon^{k+1}),\quad\dot\Omega^\mu_k\Omega^\nu_k\simeq \mathcal{O}(\epsilon^{k+1}),\quad \dot\Omega^\mu_k\dot\Omega^\nu_k\simeq \mathcal{O}(\epsilon^{k+1}),\quad etc...
\label{criteriaofnegligible}
\end{align}
 The conditions \eqref{criteriaofnegligible} can be checked order by order in $k\in \mathbb{N}$. As a consequence of such analysis, we shall prove the following
 \begin{theorem-A}
Let $\mathcal{S}\in\, \Gamma \,T\mathcal{C}$ be a semi-spray whose associated connection $\nabla$ is such that the Riemann curvature endomorphism $\mathcal{R}$ is non-zero.
  Given two geodesics ${\bf X}:I\to M$ and ${\bf x}:I\to M$ with ${\bf X}(I)\subset V$, ${\bf x}(I)\subset V$, the only $k$-jet  approximation scheme to the exact deviation equation \eqref{standardexactdeviationequation} such that the fields $\{\xi^\mu_k\xi^\nu_k\}^n_{\mu,\nu=1}$ are negligible and compatible with the action of $\Gamma(\mathbb{R}^n)$ is for $k=1$ and corresponds to the standard Jacobi equation.
 \label{Theorem A}
  \end{theorem-A}

  \begin{corollary-B}
  Let $\mathcal{S}\in\,\Gamma\, T\mathcal{C}$ be a spray. If
   \begin{itemize}
\item The solutions of the equation \eqref{eq:genJacobiEq} are $k$-jet fields along the central geodesic ${\bf X}:I\to M$ and

\item The Riemann curvature endomorphism $\mathcal{R}$ is non-zero,
\end{itemize}
then equation \eqref{eq:genJacobiEq} is not compatible with the set of transformations $\Gamma (\mathbb{R}^{n})$.
  \end{corollary-B}

  Given the above Taylor's expansions of order $k$ and a particular approximation scheme, an algebraic expression $G(\xi,\dot{\xi},\ddot{\xi})=0$  {\it can be approximated} by another algebraic expression $D(\xi,\dot{\xi},\ddot{\xi})=0$ by equating to zero in $G(\xi,\dot{\xi},\ddot{\xi})=0$ the monomials that are negligible. Such approximation is consistent with $k$-jet expansions if the error in approximating the expression $G$ by the expression $D$ is bounded or of the same order than the error in the approximation
  \begin{align}
  \{\xi^{\mu}\mapsto \Omega^\mu_k(\epsilon,s),\,\mu=1,...,n\}.
  \label{approximation}
   \end{align}
   If the error $G\to D$ is not bounded by the error in \eqref{approximation}, then the functions $\Omega^\mu_k$, which are the $k$-jet approximation to deviation equations that we are interested, cannot be a reasonable approximation to the solutions of the equation $G(\Omega^\mu_k,\dot{\Omega}^\mu_k, \ddot{\Omega}^\mu_k)=0$. This consistency problem arises clearly when studying the consistence of the approximation schemes under the action of the pseudo-group $\Gamma(\mathbb{R}^n)$.
\end{section}
\section{{Proofs}}
In this {\it section} we prove {\it Theorem} A and {\it Corollary} B. We assume that all the functions are smooth.
\begin{lemma}
The only non-trivial approximation schemes compatible with $\Gamma(\mathbb{R}^n)$ such that the monomials $\{\xi^\mu\xi^\mu,\mu,\nu=1,...,n\}$ are negligible is
 for Taylor expansions with $k=1$ such that the monomials
 \begin{align}
\{\xi^\mu\xi^\nu,\, \xi^\mu\dot{\xi}^\nu,\,\dot{\xi}^\mu\dot{\xi}^\nu,\,\dot{\xi}^\mu\dot{\xi}^\nu\xi^\rho,...\}
\label{negligible monomials}
\end{align}
are negligible.
\label{possibleapproximationschemes}
\end{lemma}
\begin{proof} We start assuming non-trivial approximation schemes. First, we can investigate the case of linear approximation scheme.
For k=1, the monomials \eqref{negligible monomials} can be approximated by the respective monomials
\begin{align*}
\{\Omega^\mu_1\Omega^\nu_1,\, \Omega^\mu_1\dot{\Omega}^\nu_1,\dot{\Omega}^\mu_1\dot{\Omega}^\nu_1,
\dot{\Omega}^\mu_1\dot{\Omega}^\nu_1\Omega^\rho_1,...\}
 \end{align*}
 with an error of order $\epsilon^2$ without imposing any restriction on the character of the functions $\Omega^\mu_1(\epsilon,s)$. Therefore, such approximation scheme is compatible with $\Gamma(\mathbb{R}^n)$ (note that by construction $\epsilon$ can be chosen to be a scalar parameter and therefore, invariant under local coordinate transformations). Indeed, for $k=1$ the monomial $\xi^\mu\xi^\nu$ is negligible, since it is of order $k+1=2$ in $\epsilon$,
\begin{align*}
\Omega^\mu_1(\epsilon,s){\Omega}^\nu_1(\epsilon,s) = {\epsilon}^2\xi^\mu_1(s)\xi^\nu_1(s),\quad \forall\, \epsilon\in (-\epsilon_0,\epsilon_0).
\end{align*}
This implies that the term $\Omega^\mu_1(\epsilon,s){\Omega}^\nu_1(\epsilon,s)$ is of the same order in $\epsilon$ than the error in the approximation $\Xi^\mu(\epsilon, s)\to \Omega^\mu_1(\epsilon,s)$ in any coordinate system, for each $\epsilon\in \, (-\epsilon_0,\epsilon_0)$. Since we have assumed that the remainder term $\rho^\mu_2$ is negligible, one can also neglect the term $\Omega^\mu_1(\epsilon,s){\Omega}^\nu_1(\epsilon,s)$.
Similarly, for the monomial $\Omega^\mu_1(\epsilon,s)\dot{\Omega}^\nu_1(\epsilon,s)$ one has the relations
\begin{align*}
\Omega^\mu_1(\epsilon,s)\dot{\Omega}^\nu_1 (\epsilon,s)& =\,\epsilon\,\xi^\mu_1(s)\,{\epsilon}\,\dot{\xi}^\nu_1(s) = \, \epsilon^2\,\xi^\mu_1(s)\dot{\xi}^\nu_1(s)\,=\mathcal{O}(\epsilon^2).
\end{align*}
This approximation leads to the Jacobi equation for $\xi_1$, which is a covariant equation.

 If we require that for k=1 the monomial  $\xi^\mu\dot{\xi}^\nu$ is not negligible, then from the above argument it follows that $\Omega^\mu_1(\epsilon,s)\dot{\Omega}^\nu_1(\epsilon,s)$ must be of order ${\epsilon}$. Thus, it must exists a smooth tensor $C^{\mu\nu}_1:I\to \mathbb{R}$ such that
\begin{align}
\Omega^\mu_1(\epsilon,s)\dot{\Omega}^\nu_1(\epsilon,s)=C^{\mu\nu}_1(s)\,{\epsilon}, \quad \forall\,\epsilon\in\,(-\epsilon_0,\epsilon_0).
\label{anoncovariant0}
\end{align}
holds, or equivalently
 \begin{align}
\xi^\mu_1(s)\dot{\xi}^\nu_1(s)=C^{\mu\nu}_1(s)\frac{1}{\epsilon}, \quad \forall\,\epsilon\in\,(-\epsilon_0,\epsilon_0).
\label{anoncovariant}
\end{align}
 However, the condition \eqref{anoncovariant} is not compatible with the action of $\Gamma(\mathbb{R}^n)$ except if
\begin{enumerate}
\item If $C^{\mu\nu}_1(s)=0$, implying that each Jacobi field  $\xi^\mu_1\frac{\partial}{\partial x^\mu}$ is zero or $\dot{\xi}^\mu_1\frac{\partial}{\partial x^\mu}$ is zero or both conditions hold.

\item The parameter $\epsilon=1$, a contradiction with our assumptions.
\end{enumerate}

For $k\geq 2$, the argument is analogous. Let us consider the Taylor approximations of order $k$ for $\xi^\mu$ and the corresponding expansion of the monomial
\begin{align*}
 \Omega^\mu_k(\epsilon,s){\Omega}^\nu_k(\epsilon,s) &= \,\Big(\sum^{k}_{j=1}\,\frac{1}{j!}\,\epsilon^j\,\frac{\partial^j \Lambda^\mu(\epsilon,s)}{\partial \epsilon ^j}\Big|_{\epsilon=0}\Big)\Big(\sum^{k}_{j=1}\,\frac{1}{k!}\,\epsilon^j\,\frac{\partial^j \Lambda^\nu(\epsilon,s)}{\partial \epsilon ^j}\Big|_{\epsilon=0}\Big)\\
 = &\,\sum^{k}_{j,l=1}\,\frac{1}{j!l!}\epsilon^{j+l}\,\frac{\partial^j \Lambda^\mu(\epsilon,s)}{\partial \epsilon ^j}\Big|_{\epsilon=0}\,\frac{\partial^l \Lambda^\nu(\epsilon,s)}{\partial \epsilon ^l}\Big|_{\epsilon=0}.
\end{align*}
Each of the monomials must be negligible, which implies that they must be of the same order $\epsilon^{k+1}$ than the remainder $\varrho^\mu_{k+1}(\epsilon,s)$,
\begin{align*}
\epsilon^{j+l}\,\frac{\partial^j \Lambda^\mu(\epsilon,s)}{\partial \epsilon ^j}\Big|_{\epsilon=0}\,\frac{\partial^l \Lambda^\nu(\epsilon,s)}{\partial \epsilon ^l}\Big|_{\epsilon=0}=\,\mathcal{O}(\epsilon^{k+1}),\quad k\geq 2
\end{align*}
For $j=l=1$, this statement implies the condition
\begin{align}
\frac{\partial \Lambda^\mu(\epsilon,s)}{\partial \epsilon }\Big|_{\epsilon=0}\,\frac{\partial\Lambda^\nu(\epsilon,s)}{\partial \epsilon}\Big|_{\epsilon=0}=\,\mathcal{O}(\epsilon^{1})
\label{conditionofsmallerrorforhigherorderk}
\end{align}
 In a similar way as for the case $k=1$, one can prove that the condition \eqref{conditionofsmallerrorforhigherorderk} is not covariant under the action of the set of transformations $\Gamma(\mathbb{R}^n)$. Therefore, non-trivial approximations schemes such that $\xi^\mu\xi^\nu$ are negligible only can work for $k=1$ or when one of the possibilities {\it (i)} or {\it (ii)} above hold.
\end{proof}

\begin{lemma}
If a given approximation scheme
\begin{align*}
(\Xi,\dot{\Xi},\ddot{\Xi}) \to (\Omega_k,\dot{\Omega}_k,\ddot{\Omega}_k)
\end{align*}
is not compatible with the action of the pseudo-group $\Gamma(\mathbb{R}^n)$ and if $G(\chi,\dot{\chi},\ddot{\chi})=0$ is an analytic relation, then the approximation
\begin{align*}
G(\Xi,\dot{\Xi},\ddot{\Xi})=0\,\to\,G(\Omega_k,\dot{\Omega}_k,\ddot{\Omega}_k)=0
\end{align*}
is not consistent with $\Gamma(\mathbb{R}^n)$.
\label{Lemma 4.2}
\end{lemma}
\begin{proof}
If an approximation scheme
$(\Xi,\dot{\Xi},\ddot{\Xi}) \to (\Omega_k,\dot{\Omega}_k,\ddot{\Omega}_k)$ is not compatible with $\Gamma(\mathbb{R}^n)$, neither the  approximation $G(\Xi,\dot{\Xi},\ddot{\Xi})=0\,\to\,G(\Omega_k,\dot{\Omega_k},\ddot{\Omega}_k)=0$ will be, except for a discrete set of values of $\epsilon^*$, solutions of algebraic conditions, as consequence of the finiteness of the degree $k$.
\end{proof}
\begin{lemma}
The expression for the generalized Jacobi equation \eqref{eq:genJacobiEq} is an algebraic relation.
\label{Lemma 4.3}
\end{lemma}
\subsection*{Proof of Theorem~A}
For $k=1$ it follows from {\it Lemma} \ref{possibleapproximationschemes} that the approximation scheme for which all the monomials $\{\xi^\mu\xi^\mu\}$ are negligible is compatible with $\Gamma(\mathbb{R}^n)$,  the rest of the monomials $\xi^\mu\dot{\xi}^\nu$, etc... must also be negligible.

For $k\geq 2$, it follows from {\it Lemma} \ref{possibleapproximationschemes} that the approximation scheme where $\xi^\nu\xi^\mu$ is negligible is not compatible with $\Gamma(\mathbb{R}^n)$ except in the situations when $\epsilon$ is not small, which is a contradiction with the requirement that $\{\xi^\mu \xi^\nu,\,\mu,\nu=1,...,n\}$ are negligible. Then by application of {\it Lemma} \ref{Lemma 4.2} and {\it Lemma}  \ref{Lemma 4.3} the result follows. \hfill$\Box$

{\it Corollary} B follows directly from {\it Theorem} A. If there are no additional restrictions on the curvature endomorphisms (in the form of constraints on the associated Jacobi fields), for $k=1$ the assumptions under which equation \eqref{standardexactdeviationequation} is approximated by \eqref{eq:genJacobiEq} does not hold in arbitrary coordinate systems for jet fields solutions. Therefore, if at least one of the components $(\xi^\mu(s), \dot{\xi}^\mu(s))$ is non-zero, for $k=1$ the only approximation scheme compatible with $\Gamma(\mathbb{R}^n)$ is the linear approximation scheme, leading to the Jacobi equation \eqref{Jacobiequation}, that we know is compatible with the set of transformations $\Gamma(\mathbb{R}^n)$.

\section{{Non-linear approximation schemes and higher order geodesic deviation equations. Application to pseudo-Finsler structures}}

We have until now only considered the possible approximation schemes for the exact geodesic deviation equation such that the monomials $\{\xi^\mu\xi^\nu,\,\mu,\nu=1,...,n\}$ are negligible, showing that for an arbitrary spray, the only approximation scheme consistent with the action of $\Gamma(\mathbb{R}^n)$ corresponds to the Jacobi equation of the connection $\nabla$. However, {\it Theorem A} leaves open the possibility for the existence of alternative consistent approximation schemes if the monomials $\{\xi^\mu\xi^\nu,\mu,\nu=1,...,n\}$ are not negligible. In this case, one can still use Taylor's expansion on the parameter $\epsilon$ and $k$-jet field approximations. This method is a direct generalization to arbitrary sprays of Ba{\.z}a{\'n}ski's theory \cite{Bazanski1977}, originally formulated for Lorentzian  metrics.

Let $\mathcal{S}$ be a spray defined on the sub-manifold $\mathcal{C}\hookrightarrow TM$. Then we can  consider the expansions \eqref{omegakfield}, their time derivatives  and insert them in the exact deviation equation \eqref{standardexactdeviationequation}. The connection coefficients are also expanded in the variable $\epsilon$ around $\epsilon'=0$, obtaining a formal series in $\epsilon$,
\begin{align*}
&\ddot\xi^{\mu}\,+\,\Gamma^{\mu}_{\nu\sigma}(X+\xi,\dot{X}+\dot{\xi})\,
\Big( \dot X^{\nu}+ \dot \xi^{\nu} \Big)
\Big( \dot X^{\sigma}+ \dot\xi^{\sigma}\Big)
 -\,\Gamma^{\mu }_{\sigma\nu}(X)\,
\dot X^{\sigma} \,\dot X^{\nu}=0\quad \Rightarrow \\
&\sum^\infty_{k=1}\,\epsilon^k \,G_k(\Xi^\mu,\dot\Xi^\mu,\ddot{\Xi}^\mu)=0.
\end{align*}
Equating to zero each term of the series, a hierarchy of ordinary differential equations is obtained,
\begin{align}
G_k(\Xi^\mu,\dot\Xi^\mu,\ddot{\Xi}^\mu)=0,\quad k=1,2,3,...
\end{align}
The equation obtained from the first order $G_1(\Xi^\mu,\dot\Xi^\mu,\ddot{\Xi}^\mu)=0$ is the Jacobi equation of the connection $\nabla$ associated with the spray $\mathcal{S}$. Higher order deviation equations are obtained by equating to zero the expressions $G_k(\Xi^\mu,\dot\Xi^\mu,\ddot{\Xi}^\mu)=0$ for $ k=2,3,...$. By construction, these higher order geodesic deviation equations are general covariant, since each of the monomial terms on each expression $G_k=0$  contribute up to the same order in $\epsilon$ in the exact deviation equation, and only terms of order $\epsilon^{k+1}$ are neglected.
\subsection{Higher order geodesic deviations in pseudo-Finsler geometry}
In the generalized Ba{\.z}a{\'n}ski's theory the spray $\mathcal{S}$ is not  necessarily affine, that is, the connection coefficients of $\Gamma^\mu_{\nu\rho}$ are not necessarily functions on $M$. In particular, one can consider {\it Finsler sprays} living on the slit tangent bundle $N=TM\setminus\{0\}$. A direct generalization of the notion of Finsler spacetime  in \cite{Beem:1970} to the analogous notion in pseudo-Finsler structures,
\begin{definition}
A pseudo-Finsler structure of signature $(p,q)$ is a pair $(M,L)$ where
\begin{itemize}
\item $M$ is an $n$-dimensional real and smooth manifold,
\item $L:N\longrightarrow R$ is a real smooth function such that
\begin{itemize}
\item $L(x,\cdot)$ is positive homogeneous of degree two in the variable $\dot{x}$,
\begin{align}
L(x,k\dot{x})=\,k^2\,L(x,\dot{x}),\quad \forall\, k\in ]0,\infty[,
\label{homogeneouscondition}
\end{align}
\item The {\it vertical Hessian}
\begin{align}
g_{\mu\nu}(x,\dot{x})=\,\frac{\partial^2\,L(x,\dot{x})}{\partial \dot{x}^\mu\,\partial \dot{x}^\nu}
\label{nondegeracy-signature}
\end{align}
is non-degenerate and with fixed signature $(-,..^p.,-,+,..^q.,+)$ for all $(x,\dot{x})\in\, N$.
\end{itemize}
\end{itemize}
\label{pseudoFinsler}
\end{definition}
The existence of a basis for the topology of $M$ and topological separability are assumed as usually it is  done in differential geometry (second countable and Hausdorff property).
A pseudo-Finsler structure of Lorentzian signature $(-1,1,...,1)$ is a Finsler spacetime in the sense of J. Beem \cite{Beem:1970}. A pseudo-Finsler structure of positive signature $(+,...,+)$ is a standard Finsler structure \cite{BaoChernShen:2000}. A key point of {\it Definition} \ref{pseudoFinsler} is that, in the Lorentzian signature case, it allows for Finslerian light-like curves, suitable to describe light rays in Finslerian theories of gravity.

The geodesics of $L$ are the critical points of the proper-time functional associated with $L$ (see for instance \cite{GallegoPiccioneVitorio:2012} for the case of Lorentzian signature). Given a pseudo-Finsler structure $(M,L)$,
a spray $\mathcal{S}\in\,\Gamma \,TN$ is the Finsler spray of $L$ if the integral curves of $\mathcal{S}$ are the total lift to $N$ of the geodesics of the pseudo-Finsler function $L$.

Given a pseudo-Finsler structure $(M,L)$, there is a standard
 Ehresmann connection determined by $L$ defining a decomposition
\begin{align}
TN=\mathcal{H}\oplus\mathcal{V},
\end{align}
where $\mathcal{V}=\,ker(d\pi)$. To construct this connection we follow \cite{GallegoPiccioneVitorio:2012}. Let us consider the Cartan's tensor
\begin{align*}
C_{\mu\nu\rho}:=\,\frac{1}{2}\, \frac{\partial g_{\nu\rho}}{\partial \dot{x}^\mu}, \quad \mu,\nu,\rho=1,...,n.
\end{align*}
The formal second kind Christoffel's symbols are defined by the expression
\begin{align*}
\gamma^\mu_{\nu\rho}:= \frac{1}{2}\,g^{\mu\lambda}\,\left(\frac{\partial g_{\nu\lambda}}{\partial x^\rho}+\,\frac{\partial g_{\lambda\rho}}{\partial x^\nu}-\,\frac{\partial g_{\nu\rho}}{\partial x^\lambda}\right),\,\quad \mu,\nu,\rho, \lambda=1,...,n.
\end{align*}
Then the non-linear connection coefficients are defined in $N$ by the expression
\begin{align*}
N^\mu_\nu = \gamma^\mu_{\nu\rho}\,\dot{x}^\rho-\,C^\mu_{\nu\rho}\,\gamma^\rho_{\lambda \sigma}\,\dot{x}^\lambda \,\dot{x}^\sigma,\quad \mu,\nu,\rho,\sigma,\lambda = 1,...,n,
\end{align*}
where $C^\mu_{\nu\rho}=\,g^{\mu\lambda} \, C_{\lambda \rho \sigma}.$

Given the non-linear connection, an adapted frame to the horizontal-vertical decomposition is determined by the smooth tangent basis for ${ T}_u { N}$ for each $\, u\in { N}$:
\begin{align}
\begin{split}
&\left\{ \frac{{\delta}}{{\delta} x^{1}}|_u ,...,\frac{{\delta}}{{\delta} x^{n}} |_u, %%@
\frac{\partial}{\partial \dot{x}^{1}} |_u,...,\frac{\partial}{\partial
\dot{x}^{n}} |_u\right\},\\
 &\frac{{\delta}}{{\delta} x^{\nu}}|_u =\frac{\partial}{\partial
x^{\nu}}|_u -N^{\mu}\,_{\nu}\frac{\partial}{\partial \dot{x}^{\mu}}|_u ,\quad
\mu,\nu=1,...,n,
\end{split}
\label{adaptedframe}
\end{align}
where $\{N^{\mu}\,_{\nu}(x,\dot{x})\}^{n}_{\mu,\nu=1}$ are the {\it non-linear connection coefficients} associated to the Finsler pseudo-Finsler structure $(M,L)$.
Given a tangent vector $X\in\,T_x M$ and $u\in \pi^{-1}(x)$, there is a unique horizontal tangent
vector $h(X)\in \,T_u N$ with $d\pi(h(X))=X$ (horizontal lift of $X$).

The pull-back bundle $\pi^* {TM}$ is the maximal subset of the Cartesian product
${ N}\times{ TM}$ such that the diagram
\begin{align*}
\xymatrix{\pi^*{TM} \ar[d]_{\pi_1} \ar[r]^{\pi_2} &
{ TM} \ar[d]^{\pi_0}\\
{ N} \ar[r]^{\pi} & { M}}
\end{align*}
commutes. This construction is of relevance for pseudo-Finsler structures, since there is a {\it Chern's type linear connection}  defined on $\pi^* {TM}$, similarly to the connection constructed in the Lorentzian case \cite{GallegoPiccioneVitorio:2012}. If ${\bf X}\in T_{\bf x}M$, its horizontal lift at the point ${\bf u}\in N$ is denoted by ${\bf X}^h_{\bf u}$. The covariant derivative of a {\it section} $\pi^*{\bf Y}=\,Y^\mu\pi^* \frac{\partial}{\partial x^\mu}\in\,\Gamma \,\pi^*{TM}$ at the point ${\bf u}=({\bf x},\dot{\bf x})$ and ${\bf X}\in T_{\bf x} M$ is
\begin{align*}
\nabla_{\bf X} {\bf Y}:=\,\nabla_{{\bf X}^h_{\bf u}}\, \big(Y^\mu\pi^*\frac{\partial}{\partial x^\mu}\big)\big|_{\bf u}=\,{ X}^\nu\frac{\partial Y^\mu}{\partial x^\nu}\frac{\partial}{\partial x^\mu}\big|_{\bf u}+\,\Gamma^\rho_{\mu\nu}(x,\dot{x})\,Y^\mu X^\nu\frac{\partial}{\partial x^\rho}\big|_{\bf u}.
\end{align*}
If parameterized by the proper time parameter of $L$, the geodesics of $L$ correspond to the auto-parallel curves of the Chern's connection,
\begin{align}
\nabla_{\dot{\bf X}}\,\dot{\bf X}=0,
\end{align}
or in local coordinates, the geodesic equation
\begin{align}
\ddot{X}^\mu(s)+\,\Gamma^\mu\,_{\nu\rho}(X,\dot{X})\dot{ X}^\nu\dot{ X}^\rho=0.
\end{align}

For a generic pseudo-Finsler structure $L$ there are only two types of non-trivial curvatures associated with the Chern's type connection, the Riemannian type or $hh$-curvature and the $hv$-vertical curvature, since the vertical or $vv$-curvature is identically zero.
The Riemann type curvature is defined to be the tensor field $R$  along $\pi:N\to M$  with components in local coordinates given by the expression
\begin{align}
R^\mu_{\nu\rho\sigma}(x,\dot{x})=\,\Big(\frac{\delta}{\delta x^\sigma}\Gamma^\mu_{\nu\rho}-\,\frac{\delta}{\delta x^\nu}\Gamma^\mu_{\sigma\rho}+\,\Gamma^\mu_{\rho\lambda}\Gamma^\lambda_{\nu\sigma}-\,\Gamma^\mu_{\sigma\lambda}\Gamma^\lambda_{\rho\nu}\Big)(x,\dot{x}),\quad \mu,\nu,\rho,\sigma=1,...,n.
\label{Riemanntypecurvature}
\end{align}
 The $hv$-{curvature} of the spray $\mathcal{S}$ is the tensor  along $\pi:N\to M$ whose components are
\begin{align}
P^\mu_{\nu \rho\sigma}(x,\dot{x}):=\,\frac{\partial \Gamma^\mu_{\nu\rho}}{\partial \dot{x}^\sigma}(x,\dot{x}),\quad \mu,\nu,\rho,\sigma=1,...,n.
\label{Landsbergtypecurvature}
\end{align}
The tensors $R$ and $P$ are related by Bianchi identities associated
with {\it torsion-free} and {\it almost-metric compatibility}
\cite{BaoChernShen:2000}.
\subsection{Application of Ba{\.z}a{\'n}ski's theory to pseudo-Finsler sprays}
In the following, we present the first and second order geodesic deviation equations for pseudo-Finsler structures following the generalized  Ba{\.z}a{\'n}ski's theory. Let us consider the expansion of the ribbon coordinate functions \eqref{ribboncoordinatefunctions}.
 The covariant vector ${\bf J}_2$ is defined by the expression ${\bf J}_2 =J^\mu_2\frac{\partial}{\partial x^\mu}$, where the components are defined as in \cite{KernerColistetevanHolten:2001},
\begin{align}
J^\mu_2=\,\Xi^\mu_2+\,\Gamma^\mu_{\nu\rho}\Xi^\nu_1\Xi^\rho_1.
\end{align}
The vertical lift of a tangent vector ${\bf Z}\in T_{\bf x}M$ to $T_{({\bf x},{\bf y})}N$ is denoted by ${\bf Z}^v=Z^\mu\frac{\partial}{\partial \dot{x}^\mu}$.
\begin{proposition}
Let $L:N\to \mathbb{R}$ be a Finsler function and consider the expansions given by \eqref{ribboncoordinatefunctions}.
Then
\begin{itemize}
 \item The first order geodesic deviation equation of a Finsler spray is the Jacobi equation for ${\bf \Xi}_1$,
\begin{align}
\nabla_{\dot{\bf X}}\,\nabla_{\dot{\bf X}}\, {\bf \Xi}_1+\,R_{\dot{\bf X}}({\bf \Xi}_1,\dot{\bf  X}){\dot{\bf X}}=\,0
\label{jabociequationfinsler}
\end{align}
with initial conditions $({ \Xi}^\mu_1(0),\dot{\Xi}^\mu_1(0))$.

\item The second order geodesic deviation equation is the following non-linear differential equation for ${\bf J}_2$:
\begin{align}
\begin{split}
& \nabla_{\dot{\bf X}}\,\nabla_{\dot{\bf X}}\, {\bf J}_2+\,R_{\dot{\bf X}}({\bf J}_2,\dot{\bf  X}){\dot{\bf X}} =\,\Big(\nabla_{{\bf \Xi}_1}R\Big)(\dot{\bf X},{\bf \Xi}_1)\dot{\bf X}-\Big(\nabla_{\dot{\bf X}}R\Big)(\dot{\bf X},{\bf \Xi}_1){\bf \Xi}_1\\
& +\,4 R(\dot{\bf X},{\bf \Xi}_1)(\nabla_{\dot{\bf X}}{\bf \Xi}_1)
 -\, \Big(\nabla_{(\nabla_{\dot{\bf X}}{\bf \Xi}_1)^v}P\Big)((\nabla_{\dot{\bf X}}{\bf \Xi}_1)^v,\dot{\bf X})\dot{\bf X}
-2\big(\nabla_{{\bf \Xi}_1}P\Big)((\nabla_{\dot{\bf X}}{\bf \Xi}_1)^v,\dot{\bf X})\dot{\bf X}
\end{split}
\label{secondorderdeviationonfinsler}
\end{align}
with the initial conditions $(J^\mu_2(0),\dot{J}^\mu_2(0))$.
\end{itemize}
\label{propositionondeviationequationinFinsler}
\end{proposition}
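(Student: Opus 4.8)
The strategy is to view both equations as the first two members of the hierarchy obtained by differentiating, in the variation parameter $\epsilon$, the geodesic equation satisfied by the whole family $\Lambda(\epsilon,\cdot)$. Since each curve $\Lambda(\epsilon,\cdot)$ is a geodesic of the Finsler spray, it satisfies in local coordinates
\begin{align}
\partial_s^2\Lambda^\mu+\Gamma^\mu_{\nu\sigma}(\Lambda,\partial_s\Lambda)\,\partial_s\Lambda^\nu\,\partial_s\Lambda^\sigma=0
\label{planfamilygeodesic}
\end{align}
for every $\epsilon$, this being \eqref{geodesicequation} applied to the ribbon. Writing $\Xi_1^\mu=\partial_\epsilon\Lambda^\mu|_{\epsilon=0}$ and $\Xi_2^\mu=\partial_\epsilon^2\Lambda^\mu|_{\epsilon=0}$ as in \eqref{ribboncoordinatefunctions}, I would apply $\partial_\epsilon$ and $\partial_\epsilon^2$ to \eqref{planfamilygeodesic} and evaluate at $\epsilon=0$, where $\Lambda=X$. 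This is the generalisation of the Ba{\.z}a{\'n}ski--Kerner differentiation procedure to a connection whose coefficients $\Gamma^\mu_{\nu\sigma}(x,\dot x)$ depend on direction.

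For the first order, a single $\partial_\epsilon$ of \eqref{planfamilygeodesic} at $\epsilon=0$ produces $\partial_s^2\Xi_1^\mu$ together with terms containing $\partial\Gamma/\partial x^\rho$, $\partial\Gamma/\partial\dot x^\rho$ and $\Gamma\,\partial_s\Xi_1$. The plan is to recognise the covariant derivative $(\nabla_{\dot X}\Xi_1)^\mu=\partial_s\Xi_1^\mu+\Gamma^\mu_{\nu\sigma}\Xi_1^\nu\dot X^\sigma$ and, iterating it once more together with the symmetry \eqref{symmetrycondition} and the homogeneity relations for $\Gamma^\mu_\sigma$, to reassemble the whole expression as $\nabla_{\dot X}\nabla_{\dot X}\Xi_1$ plus a purely algebraic term which, by the definition \eqref{Riemanntypecurvature}, equals $R_{\dot X}(\Xi_1,\dot X)\dot X$. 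At this order only first $\dot x$-derivatives of the nonlinear connection occur, so no $hv$-curvature \eqref{Landsbergtypecurvature} can appear, and the Jacobi equation \eqref{jabociequationfinsler} results.

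For the second order I would differentiate \eqref{planfamilygeodesic} twice. This yields $\partial_s^2\Xi_2^\mu$ plus a long list of terms: second derivatives of $\Gamma$ of all three types $\partial^2/\partial x\,\partial x$, $\partial^2/\partial x\,\partial\dot x$, $\partial^2/\partial\dot x\,\partial\dot x$, products of first derivatives of $\Gamma$ with $\Xi_1$ and $\partial_s\Xi_1$, and terms $\Gamma\,\partial_s\Xi_2$ and $\Gamma\,\partial_s\Xi_1\,\partial_s\Xi_1$. The essential move is to pass from $\Xi_2$ to the tensorial field $J_2^\mu=\Xi_2^\mu+\Gamma^\mu_{\nu\rho}\Xi_1^\nu\Xi_1^\rho$: I would first check that $J_2$ transforms as a vector, the correction term cancelling the inhomogeneous second-order part of a coordinate change, and then verify that $\partial_s^2\Xi_2^\mu$ together with the $\Gamma$-quadratic pieces combine exactly into the covariant operator $\nabla_{\dot X}\nabla_{\dot X}J_2+R_{\dot X}(J_2,\dot X)\dot X$, so that the left-hand side of \eqref{secondorderdeviationonfinsler} emerges. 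The remaining terms are then rewritten, using \eqref{Riemanntypecurvature} and \eqref{Landsbergtypecurvature}, as the covariant-derivative expressions $(\nabla_{\Xi_1}R)(\dot X,\Xi_1)\dot X$, $(\nabla_{\dot X}R)(\dot X,\Xi_1)\Xi_1$, $R(\dot X,\Xi_1)(\nabla_{\dot X}\Xi_1)$ and the two $P$-terms carrying the vertical lift $(\nabla_{\dot X}\Xi_1)^v$.

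The hard part is precisely this last collection. Because $\nabla$ is direction-dependent, the second $\epsilon$-derivative excites $\partial\Gamma/\partial\dot x=P$ and its further derivatives, which have no analogue in the affine Ba{\.z}a{\'n}ski--Kerner computation; these are what produce the vertical-lift curvature terms in \eqref{secondorderdeviationonfinsler}. The delicate bookkeeping is, first, to separate the horizontal and vertical contractions correctly, since $\partial_\epsilon$ acting on $\dot\Lambda$ inside $\Gamma$ feeds $\partial_s\Xi_1$ into a velocity slot rather than a position slot, thereby explaining why the $P$-arguments are vertical lifts; and second, to fix the exact numerical coefficients $4$, $-1$ and $-2$, which arise from combining the symmetry \eqref{symmetrycondition}, the degree homogeneity of $\Gamma^\mu_\nu$ and of $P$, and the curvature definition \eqref{Riemanntypecurvature} when commuting the order of the $\partial_\epsilon$ and $\partial_s$ derivatives. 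Setting $P\equiv0$, the affine case, should reduce the whole expression to the known second-order geodesic deviation equation, which provides a useful consistency check on the coefficients.
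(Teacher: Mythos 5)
Your proposal is, in substance, the paper's own proof: both run the generalized Ba{\.z}a{\'n}ski procedure, expanding the geodesic equation of the variation $\Lambda$ (equivalently, the exact deviation equation \eqref{standardexactdeviationequation}) in powers of $\epsilon$, reading off the order-$\epsilon$ and order-$\epsilon^{2}$ identities, introducing the corrected variable $J_2^\mu=\Xi_2^\mu+\Gamma^\mu_{\nu\rho}\Xi_1^\nu\Xi_1^\rho$, and splitting the second-order terms into an affine ($R$-curvature) part and an intrinsically Finslerian ($P$-curvature) part carried by vertical lifts. The one methodological difference lies in how the tensorial form of \eqref{secondorderdeviationonfinsler} is established: the paper rearranges the order-$\epsilon^{2}$ identity in Fermi coordinates along $X$, identifies the first group of terms with the known affine second-order deviation equation of Kerner--van Holten--Colistete, rewrites the residual $\dot\Xi_1$-terms through $P$, and then invokes the general covariance of the construction to conclude the equation holds in all coordinates; you propose instead to check directly, in arbitrary coordinates, that $J_2$ is vectorial and that the terms assemble into covariant operators. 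Your route costs more bookkeeping but buys something real: it bypasses the delicate point, which the paper must relegate to a footnote, that Fermi coordinates in Finsler geometry are only $\mathcal{C}^1$, so using them to identify curvature-level tensorial expressions needs a regularity justification. Your $P\equiv 0$ consistency check plays the role of the paper's citation of the affine result.

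One concrete correction to your first-order step. Starting from the Christoffel form \eqref{geodesicequation} of the geodesic equation, a single $\partial_\epsilon$ \emph{does} produce an $hv$-curvature term: differentiating $\Gamma^\mu_{\nu\sigma}(\Lambda,\partial_s\Lambda)$ through its velocity argument yields $P^\mu_{\nu\sigma\rho}\,\dot X^\nu\dot X^\sigma\,\partial_s\Xi_1^\rho$, with $P$ as in \eqref{Landsbergtypecurvature}. So your assertion that no $hv$-curvature can appear at first order, on the grounds that only first $\dot x$-derivatives of the nonlinear connection occur, is not right as stated. The term is present, and it vanishes only because the spray connection coefficients $\Gamma^\mu_{\nu\sigma}$ are positively homogeneous of degree zero in $\dot x$: Euler's theorem, combined with the symmetry of $P$ in its lower indices for a spray-induced connection, gives $P^\mu_{\nu\sigma\rho}\dot X^\nu\dot X^\sigma=0$. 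Without making this cancellation explicit, your first bullet would carry a spurious extra term and would not reduce to the Jacobi equation \eqref{jabociequationfinsler}; the same homogeneity identity is what you will need repeatedly at second order to pin down the coefficients $4$, $-1$ and $-2$.
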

\begin{proof}
 Given a pseudo-Finsler structure $(M,L)$ with the corresponding Chern's type connection $\nabla$ and the corresponding connection coefficients $\Gamma^\mu\,_{\nu\rho}(X,\dot{X})$  in a local coordinate chart, the equation \eqref{standardexactdeviationequation} can be expanded in powers of $\epsilon$. Equation \eqref{jabociequationfinsler} is a generalization of the Jacobi equation for pseudo-Finsler structures \cite{GallegoPiccioneVitorio:2012}, obtained by grouping together all the terms proportional to $\epsilon$ in the expansion of the exact geodesic deviation equation \eqref{standardexactdeviationequation}. When this is done, one obtains the equations
\begin{align*}
\ddot \Xi^{\mu}_1\,+\pd{\Gamma^{\mu}_{\nu\sigma}}{x^\rho} (X,\dot{X})\,\Xi^{\rho}_1
\dot X^{\nu}\,\dot X^{\sigma}
+2\,\Gamma^{\mu }_{\nu\sigma}(X,\dot{X})\,\dot X^{\sigma} \Xi^{\nu}_1=0,\quad \,\mu,\nu,\rho,\sigma=1,...,n.
\end{align*}
After a re-arrangement of this expression one obtains equation \eqref{jabociequationfinsler}.

Equation \eqref{secondorderdeviationonfinsler} follows from the equality $G_2=0$ in front of the term $\epsilon^2$ in the exact deviation equation. Re-arranging the terms that are proportional to $\epsilon^2 $, one obtains the following relation
\begin{align*}
F^\mu_T& := \,\ddot \Xi^{\mu}_2\, +\pd{\Gamma^{\mu}_{\nu\sigma}}{x^\rho}(X,\dot{X}) \,\Xi^{\rho}_2
\dot X^{\nu}\,\dot X^{\sigma} +2\,\Gamma^{\mu }_{\nu\sigma}(X,\dot{X})\,\dot X^{\sigma} \Xi^{\nu}_2+\,2\,\Gamma^\mu_{\nu\rho}(X,\dot{X})\dot{\Xi}^\nu_1\dot{\Xi}^\rho_1\\
&\,+4\pd{\Gamma^{\mu}_{\nu\sigma}}{x^\rho}(X,\dot{X})\,\Xi^{\rho}_1
\dot X^{\nu}\,\dot \Xi^{\sigma}_1 +\Xi^\lambda_1\Xi^\sigma_1\frac{\partial^2 \Gamma^\mu_{\nu\rho}(X,\dot{X})}{\partial x^\lambda\partial x^\sigma}\dot{X}^\nu\dot{X}^\rho\\
&+\,\dot{\Xi}^\lambda_1\dot\Xi^\sigma_1\frac{\partial^2 \Gamma^\mu_{\nu\rho}(X,\dot{X})}{\partial \dot{x}^\lambda\partial \dot{x}^\sigma}\dot{X}^\nu\dot{X}^\rho+\,2{\Xi}^\lambda_1\dot\Xi^\sigma_1\frac{\partial^2 \Gamma^\mu_{\nu\rho}(X,\dot{X})}{\partial x^\lambda\partial \dot{x}^\sigma}\dot{X}^\nu\dot{X}^\rho=0.
\end{align*}
In this expression there are two different type of terms. The first and second lines in $F^\mu_T$ correspond to the {\it affine terms}, obtained by derivation respect to the $x$-coordinates the connection coefficients,
\begin{align*}
F^\mu_1& :=\ddot \Xi^{\mu}_2\, +\pd{\Gamma^{\mu}_{\nu\sigma}}{x^\rho}(X,\dot{X}) \,\Xi^{\rho}_2
\dot X^{\nu}\,\dot X^{\sigma}
+2\,\Gamma^{\mu }_{\nu\sigma}(X,\dot{X})\,\dot X^{\sigma} \Xi^{\nu}_2+\,2\,\Gamma^\mu_{\nu\rho}(X,\dot{X})\dot{\Xi}^\nu_1\dot{\Xi}^\rho_1\\
&+4\pd{\Gamma^{\mu}_{\nu\sigma}}{x^\rho}(X,\dot{X})\,\Xi^{\rho}_1
\dot X^{\nu}\,\dot \Xi^{\sigma}_1+\Xi^\lambda_1\Xi^\sigma_1\frac{\partial^2 \Gamma^\mu_{\nu\rho}(X,\dot{X})}{\partial x^\lambda\partial x^\sigma}\dot{X}^\nu\dot{X}^\rho.
\end{align*}
Such terms are the same than in the affine second order deviation equation \cite{Bazanski1977, KernerColistetevanHolten:2001}. Furthermore, along the geodesic ${\bf X}:I\to M$, the Cartan tensor contracted with $\dot{\bf X}$ vanishes, $C(\dot{\bf X},\cdot,\cdot)=0$. This implies that the expression for $F^\mu_1$ is equivalent to the covariant expression
\begin{align*}
F^\mu_1& =\nabla_{\dot{\bf X}}\,\nabla_{\dot{\bf X}}\, {\bf J}_2+\,R_{\dot{\bf X}}({\bf J}_2,\dot{\bf  X}){\dot{\bf X}} \Big(\nabla_{{\bf \Xi}_1}R\Big)(\dot{\bf X},{\bf \Xi}_1)\dot{\bf X}-\Big(\nabla_{\dot{\bf X}}R\Big)(\dot{\bf X},{\bf \Xi}_1){\bf \Xi}_1\\
& \,+\,4 R(\dot{\bf X},{\bf \Xi}_1)(\nabla_{\dot{\bf X}}{\bf \Xi}_1),
\end{align*}
where the covariant derivatives are taken at the point $({\bf X},\dot{\bf X})\in\, N$.

 The third line in the expression for $F^\mu_T$ is related with the $hv$-curvature of the Chern connection and is intrinsically a non-affine contribution. Note that the functions $\{\dot{\Xi}^\mu_1,\mu=1,...,n\}$ do not define the components of a vector field along ${\bf X}:I\to M$. In order to define an associated vector field, one can consider the covariant derivatives ${(\nabla_{\dot{\bf X}}{\bf \Xi}_ 1)^v}$ and $\nabla_{{\bf \Xi}_1}{\bf X}$, both evaluated at $({\bf X},\dot{\bf X})$. Furthermore, note that all the derivatives (and therefore, the corresponding connection coefficients) are taken and considered at the points ${\bf u}(s)=({\bf \Xi}(\epsilon,s),\dot{\bf \Xi}(\epsilon,s))\in T_{\bf \Xi}M\setminus \{0\}$. Thus fixed ${\bf u}$ there is an unique affine connection $\bar{\nabla}$ on $\Gamma((-\epsilon_0,\epsilon_0)\times I)\subset M$ determined by the relation
 \begin{align}
 \bar{\nabla}_{\frac{\partial}{\partial x^\nu}}\,\frac{\partial}{\partial x^\rho}:=\,\Gamma^\mu_{\nu\rho}({\bf \Xi},\dot{\bf \Xi})\,\frac{\partial }{\partial x^\mu},\quad \mu,\nu,\rho=1,...,n.
 \end{align}
 Since the vector $\dot{\bf X}(s)\neq 0$, the Chern connection coefficients $\Gamma^\mu_{\nu\rho}({\bf X},\dot{\bf X})$ are smooth. Moreover, the connection $\bar{\nabla}$ is symmetric. Therefore, there are normal coordinates at each point ${\bf X}(s)$ and such coordinates are smooth. Note that the normal coordinate systems can change along the curve ${\bf X}:I\to M$. Thus, for a fixed point ${\bf X}(s)$, the $hv$-curvature terms can be written in these normal coordinate system as
\begin{align*}
\big(\dot{\Xi}^\lambda_1\dot\Xi^\sigma_1\frac{\partial^2 \Gamma^\mu_{\nu\rho}}{\partial y^\lambda\partial y^\sigma}\dot{X}^\nu\dot{X}^\rho +2\,{\Xi}^\lambda_1\dot\Xi^\sigma_1\frac{\partial^2 \Gamma^\mu_{\nu\rho}}{\partial x^\lambda\partial y^\sigma}\dot{X}^\nu\dot{X}^\rho \big)\frac{\partial}{\partial x^\mu} & =\Big(\nabla_{(\nabla_{\dot{\bf X}}{\bf \Xi}_1)^v}P\Big)((\nabla_{\dot{\bf X}}{\bf \Xi}_1)^v,\dot{\bf X})\dot{\bf X}
 \\& +2\Big(\nabla_{{\bf \Xi}_1}P\Big)((\nabla_{\dot{\bf X}}{\bf \Xi}_1)^v,\dot{\bf X})\dot{\bf X},
\end{align*}
where all the derivatives are taken at the point $u=({\bf X}(s),\dot{\bf X}(s))$. Let us mention that all the $\Gamma$-terms have been put equal to zero on the left hand side of the relation.
Since Ba{\.z}a{\'n}ski's method is general covariant, {\it equation} \eqref{secondorderdeviationonfinsler} holds in any coordinate system.
\end{proof}
\section{Applications}
\subsection{Berwald spaces} Berwald spaces are pseudo-Finsler structures among the closest pseudo-Finsler  manifolds.
\begin{definition} A Berwald space is a pseudo-Finsler structure such that $P=0$.
\end{definition}

 Let us consider a generic pseudo-Finsler structure $(M,F)$. One can be interested to know when such a space has the same un-parameterized geodesics of a pseudo-Riemannian structure $(M,h)$. If this is the case, the corresponding Jacobi equation and higher order geodesic deviation equations must be the same for $F$ and for $h$. It is easy to see that this only can happens if $P=0$. Therefore,
 \begin{proposition}
 A necessary condition for the pseudo-Finsler space has the same un-parameterized geodesics than a pseudo-Riemannian manifold is that $P=0$.
 \end{proposition}

\subsection{Riemann-flat pseudo-Finsler structures}
 Another class of interesting spaces are characterized by the following
\begin{definition}
A {\it Riemann-flat pseudo-Finsler structure} is a pseudo-Finsler structure with $R=0$ and $P\neq  0$.
\end{definition}
For positive definite Finsler metrics, it is difficult to find examples of non-Riemannian and non-Minkowskian spaces (in the Finslerian sense, with both curvature tensors $P$ and $R$ null) with Riemannian curvature tensor ${R}=0$, but with $hv$-curvature non-trivial  for the Chern connection. The fish-tank metric (see \cite{BaoRoblesShen}) has $R=0$ and $P\neq 0$ (for the Chern's connection).

In Riemann-flat spacetimes, the first deviation equation \eqref{jabociequationfinsler} reduces to
\begin{align}
\nabla_{\dot{\bf X}}\,\nabla_{\dot{\bf X}}\, {\bf \Xi}_1=0,
\label{Finslerasymptoticallyflat1}
\end{align}
and the second deviation equation \eqref{secondorderdeviationonfinsler} reduces to
\begin{align}
\nabla_{\dot{\bf X}}\,\nabla_{\dot{\bf X}}\, J_2=-\nabla_{(\nabla_{\dot{X}}{\bf \Xi}_1)^v}P((\nabla_{\dot{\bf X}}{\bf \Xi}_1)^v,\dot{\bf X})\dot{\bf X}\,-2\nabla_{{\bf \Xi}_1}P((\nabla_{\dot{\bf X}}{\bf \Xi}_1)^v,\dot{\bf X})\dot{\bf X}.
\label{Finslerasymptoticallyflat2}
\end{align}

Equations \eqref{Finslerasymptoticallyflat1} and \eqref{Finslerasymptoticallyflat2} have potential applications in {\it Finslerian cosmology} (see for instance \cite{KouretsisStathakopoulosStraviros:2010}). In particular, the second equation involves the local anisotropy tensor  $P$.  Even if the space is flat, the relation \eqref{Finslerasymptoticallyflat2} implies a non-trivial behaviour for neighboring geodesics, only observable at large scales.

Other source of applications of these space are  on phenomenology of Finslerian structures on gravitational waves. In this case, due to the long separation of the test particles moving on a wave, higher order deviation equations could be very useful.

\section{{Discussion}}

We have shown that the covariance for the generalized Jacobi equation \eqref{eq:genJacobiEq} is in contradiction with
{\it Lemma} \eqref{possibleapproximationschemes}, since the hypothesis that second order monomials $\xi^\mu\xi^\nu$ etc... are negligible is consistent with general covariance only for $k=1$ and for the linear approximation scheme. Indeed, for rapid deviation schemes, it  is not longer true that the square of the deviation functions are negligible. Thus, some constraints must be imposed to obtain a covariant equivalent version of the equation \eqref{eq:genJacobiEq}.

Equation \eqref{eq:genJacobiEq} is covariant under affine local coordinate transformations. Since the transformation between two Fermi coordinate systems are affine coordinate transformations, the generalized Jacobi equation \eqref{eq:genJacobiEq} is covariant under coordinate transformations from Fermi to Fermi local coordinate systems \cite{DahlTorrome}. However, as an approximation to the exact deviation equation \eqref{standardexactdeviationequation}, the generalized Jacobi equation \eqref{eq:genJacobiEq} fails to be general covariant, since the hypothesis of the rapid approximation scheme break down in arbitrary coordinates.

An alternative theory  to the one based upon the equation \cite{Hodgkinson} was initially developed by B. Mashhoon \cite{Mashhoon1, Mashhoon:1977} and further applied in astrophysical systems (see for instance \cite{ChiconeMashhoon:2002, ChiconeMashhoon:2005, ChiconeMashhoon:2006}).
In Mashhoon's construction, one first considers a local Fermi coordinate system $(\,^Fx,U_F)$ where (at least locally) the image of the two geodesics ${\bf x},{\bf X}:I\to M$ are defined on the domain $U_F\subset M$. When this is possible and such Fermi coordinate exists, the solutions $\{\,^F\xi^{\mu}_M\}^n_{\mu=1}$ of Mashhoon's generalized geodesic deviation  in Fermi coordinates correspond to a tangent vector in the direction of the geodesic joining the central geodesic ${\bf X}(s)$ with the corresponding point of the second geodesic ${\bf x}(s)$. Although initially formulated in Fermi coordinates, Mashhoon's equation can be written in arbitrary coordinates (see {\it Appendix C} in \cite{ChiconeMashhoon:2002} or reference \cite{ChiconeMashhoon:2005} for details). However, the fact that initially we are able to embed the geodesics in a Fermi coordinate chart imposes a constraint, restricting the theory to such special geometric case. Furthermore, the physical meaning attached to the solutions of Mashhoon's equation as describing the deviation functions $^F\xi^{\mu}_M=\xi^\mu$ is only valid for Fermi coordinate systems: as we know (see for instance \cite{DahlTorrome}), $\{\xi^\mu\}^n_{\mu=1}$ are not tensorial when identified with the solutions of equation\eqref{eq:genJacobiEq}, while the components $\{\,^F\xi^{\mu}_M\}^n_{\mu=1}$ in Mashhoon's theory define a vector.

It was discussed by B. Schutz  \cite{Schutz1985} that there is no a consistent generalization of the geodesic deviation equation in the {\it rapidly deviation scheme}. Schutz's analysis relies on the prescription that the geodesic curves are joined by geodesics and the argument is restricted to Riemann normal coordinates. In contrast, we have only required the existence of an initial simple curve connecting the initial points ${\bf x}(0)$ and ${\bf X}(0)$. Thus our method extends the conclusion of \cite{Schutz1985} to more general pairs of geodesics, not necessarily with image in the interior of normal coordinate domains.

We have seen that there are alternative frameworks for generalizing geodesic deviation equations beyond linearization in the geodesic deviation functions  $\{\xi^\mu_j\}^{n,k}_{\mu=1,j=1}$.  Ba{\.z}a{\'n}ski's theory \cite{Bazanski1977} is a convenient framework to investigate geodesic deviations beyond the Jacobi equation. In such formalism, one can formulate an hierarchy of general covariant differential equations for the functions $\{\xi^\mu_j\}^{n,k}_{\mu=1,j=1}$, although one needs to abandon the requirement that the monomials  $\{\xi^\mu_j\xi^\nu_i\}^{n,k}_{\mu,\nu=1,j,i=1}$ are negligible.
Ba{\.z}a{\'n}ski's theory was applied extensively in the investigation of geodesic motion in general relativistic spacetimes (see for instance  \cite{KernerColistetevanHolten:2001, ColisteteLeygnacKerner2002, KoekoekHolten2011} and in subsequent works by these authors).
We have shown that Ba{\.z}a{\'n}ski's theory can be extended to connections determined by arbitrary sprays $\mathcal{S}\in \,\Gamma \,T\mathcal{C}$.
This generalized framework has been applied in this paper to an arbitrary Finsler spray, obtaining the usual Jacobi equation \eqref{jabociequationfinsler} and a new second order deviation equation \eqref{secondorderdeviationonfinsler} in pseudo-Finsler geometry. Some of the applications have been briefly mentioned, as the case of Berwald-type spacetimes and spaces with $P\neq 0$ and $R=0$.

Finally, let us remark that in the literature there are geometric generalizations of the Jacobi equation associated to a general semi-spray. For instance, in the work of I. Bucataru and M. F. Dahl it is considered how $k$-parameter geodesic variations are related with $k$-lift of the semi-spray in the $k$-iterated tangent bundle \cite{BucataruDahl2011}. Such a result applies to general semi-sprays and implies differential conditions for the jets of the semi-spray. Moreover, the theory was also developed for the particular case of sprays, which a generalization of the Jacobi field equation in terms of Jacobi like equation in terms of the Jacobi tensor of the spray \cite{BucataruDahl2011}. How Bucataru-Dahl theory is related with the proposals discussed here, specifically, with Ba{\.z}a{\'n}ski's theory, is an interesting question, that we post-pone for future research.

\subsection*{Acknowledgements} R. G. T was financially supported by FAPESP (Brazil), process 2010/11934-6 and by PNPD-CAPES nº. 2265/2011 (Brazil); J. G. is grateful
for the support provided by STFC (the Cockcroft Institute ST/G008248/1 and ST/P002056/1) and
EPSRC (the Alpha-X project EP/J018171/1 and EP/N028694/1).

\small{}
\end{document}